\def\BibTeX{{\rm B\kern-.05em{\sc i\kern-.025em b}\kern-.08em
    T\kern-.1667em\lower.7ex\hbox{E}\kern-.125emX}}
\DeclareRobustCommand{\sectsign}{\textsection}
\newtheorem{theorem}{Theorem}
\newtheorem{lemma}{Lemma}
\title{Joint Scheduling of DER under Demand Charges: Structure and Approximation}
\newcommand{\linebreakand}{%
      \end{@IEEEauthorhalign}
      \hfill\mbox{}\par
      \mbox{}\hfill\begin{@IEEEauthorhalign}
    }
\author{%
\IEEEauthorblockN{Ruixiao Yang\textsuperscript{*}\orcidlink{0000-0002-2852-0150}, Gulai Shen\textsuperscript{*}\orcidlink{0000-0002-6016-0159}, Ahmed S. Alahmed\orcidlink{0000-0002-4715-4379}, Chuchu Fan\orcidlink{0000-0003-4671-233X}}

\thanks{\textsuperscript{*}Equal contribution. Ruixiao Yang (\textcolor{blue}{ruixiao@mit.edu}) and Chuchu Fan (\textcolor{blue}{chuchu@mit.edu}) are with the Massachusetts Institute of Technology, Cambridge, USA. Ahmed S. Alahmed (\textcolor{blue}{aalahmad@kfupm.edu.sa}) is with the King Fahd University of Petroleum and Minerals, Dhahran, KSA.
Gulai Shen (\textcolor{blue}{gulai\_shen@gsd.harvard.edu}) is with Harvard University, Cambridge, USA.}
\thanks{A preliminary version of this paper was published as a conference paper
at the 2025 IEEE Power \& Energy Society General Meeting (PESGM), Austin, TX, USA, July 2025 \cite{Yang&Shen&Alahmed&Fan:25PESGM}. ({\em Corresponding author: Ahmed S. Alahmed.})}}
\date{}
\begin{document}
\maketitle
\begin{abstract}
    We study the joint scheduling of behind-the-meter distributed energy resources (DERs), including flexible loads, renewable generation, and battery energy storage systems, under net energy metering tariffs with demand charges. The problem is formulated as a stochastic dynamic program aimed at maximizing expected operational surplus while accounting for renewable generation uncertainty. We analytically characterize the optimal control policy and show that it admits a threshold-based structure. However, due to the strong temporal coupling of the storage and demand charge constraints, the number of conditional branches in the policy scales combinatorially with the scheduling horizon, as it requires a look-ahead over future states. To overcome the high computational complexity in the general formulation, an efficient approximation algorithm is proposed, which searches for the peak demand under a mildly relaxed problem. We show that the algorithm scales linearly with the scheduling horizon. Extensive simulations using two open-source datasets validate the proposed algorithm and compare its performance against different DER control strategies, including a reinforcement learning-based one. Under varying storage and tariff parameters, the results show that the proposed algorithm outperforms various benchmarks in achieving a relatively small solution gap compared to a theoretical upper bound.
\end{abstract}
\begin{IEEEkeywords}
Battery storage systems, demand charges, distributed energy
resources, dynamic programming, energy management,
flexible demands, net metering, peak searching.
\end{IEEEkeywords}
\section{Background and Literature}
\IEEEPARstart{T}{he} transition to a clean, resilient, and sustainable energy system hinges on the optimal coordination of behind-the-meter (BTM) DERs, including distributed generation (DG), flexible demands, and battery energy storage system (BESS) \cite{akorede_distributed_2010}. The operation and coordination of these resources are driven by both economic incentives and technical constraints. Under net energy metering (NEM), {\em prosumers} can bi-directionally transact energy and money with the distribution utility (DU), who levy a {\em buy} (import) rate or a {\em sell} (export) rate based on the net-consumption \cite{Alahmed&Tong:22IEEETSG}. 

With the increasing penetration of new and large electric loads such as electric vehicles, electric ovens, and heat pumps, DUs are increasingly supplementing volumetric charges\footnote{Charges based on the total net energy consumption.} with {\em demand charges}. These are defined as \$/kW fees imposed on the customer's peak demand recorded over a billing cycle, typically measured in 15-minute or hourly intervals, and are introduced to encourage smoother load profiles and reduce the operational costs of grid maintenance \cite{jin_optimal_2017}. In the parlance of DU tariff design, this mechanism is referred to as a {\em non-coincident peak demand charge}, which levies charges based on each customer’s individual peak demand over a billing period, regardless of when it occurs. This contrasts with a {\em coincident peak demand charge}, which charges customers based on their demand contribution during the system-wide peak. Non-coincident demand charges are more commonly employed by DUs due to their administrative simplicity and alignment with cost recovery objectives for distribution infrastructure \cite{Stern:13NRELPeakDemand}.

Demand charges are a common component of commercial and industrial electricity tariffs and are increasingly being extended to residential customers in some jurisdictions.\footnote{Demand charges differ from {\em critical peak pricing}, a form of time-varying pricing aimed at incentivizing load shedding during critical periods \cite{Neufeld:87JEH}.} 

Utilities such as Pacific Gas and Electric (PG\&E), Consolidated Edison (Con Edison), Florida Power \& Light (FPL), and Arizona Public Service (APS) apply demand charges that vary based on customer class, time-of-use periods, and peak-coincident intervals \cite{PGEDemandCharge, ConEdisonTariff, FPLTariff,APSTariff}. These charges can constitute a substantial portion of monthly electricity bills, thereby incentivizing customers to strategically manage their load profiles. In response, utilities are adopting demand-based pricing not only to better reflect the cost of capacity provision but also to encourage peak load reduction and grid stability. From a system operator’s perspective, demand charges serve as a proxy for managing distribution-level constraints and capacity costs. Consequently, optimizing DER operations under such pricing schemes is essential for both economic efficiency and system reliability.

With the growing adoption of demand charges, several studies have highlighted the role of BTM DER in helping customers mitigate them \cite{StorageDemandCharge:17NREL, jin_optimal_2017, Luo&King&Ranzi&Dong:20TSG}.

This work focuses on the joint scheduling of BTM DER, including flexible demand, renewable DG, and BESS, under NEM frameworks with differentiated buy and sell rates and demand charges (Fig. \ref{fig:der_setup}). A comparative overview of related studies in terms of demand flexibility, storage integration, NEM capabilities, and demand charges is provided in Table \ref{tab:comparison}. To the best of our knowledge, this work is the first to co-optimize flexible demand and storage operations under the combined presence of NEM and demand charges.



\subsection{Related Work}

Previous research has explored various angles of home energy management systems (HEMS), particularly in optimizing prosumer energy costs and integrating renewable energy sources. Studies such as \cite{Harshah&Dahleh:15TPS,Zhang&Leibowicz&Hanasusanto:20TSG} utilized dynamic programming to minimize prosumer costs by scheduling BESS operations under inelastic demand, focusing on storage optimization without considering demand flexibility or NEM. Similarly, \cite{Chen&Wang&Heo&Kishore:13TSG,Khezri&Mahmoudi&Haque:21TSE} investigated the co-optimization of demand flexibility and BESS, addressing flexible demands and storage but omitting NEM's bidirectional energy transactions, which complicates the decision problem.

The integration of NEM, which allows prosumers to buy and sell energy, has been addressed in works like \cite{Li&Dong:19TSG,Xu&Tong:17TAC,alahmed_co-optimizing_2024,Jeddi&Mishra&Ledwich:20TSE}. For instance, \cite{Li&Dong:19TSG} optimized storage scheduling under NEM, while \cite{Xu&Tong:17TAC,alahmed_co-optimizing_2024,Jeddi&Mishra&Ledwich:20TSE} jointly optimized storage and flexible demands. However, these studies often neglect demand charges, which significantly complicate the scheduling problem due to their impact on peak load costs.

In contrast, while \cite{jin_optimal_2017,Luo&King&Ranzi&Dong:20TSG} incorporated demand charges into their optimization models, addressing peak power consumption penalties, the assumed free disposal of surplus renewables eliminates NEM's bidirectional energy transactions, which is a critical limitation given the growing real-world prevalence of such interactions. To provide a broader context, Zainab et al. \cite{Zainab&Ali&Ahmed&Syed&Zia:18Access} conducted a comprehensive review of HEMS, encompassing demand response programs, smart technologies, and intelligent controllers.

Recent progress in reinforcement learning (RL) has further enriched the area by enabling data-driven approaches to jointly optimize DER under complex tariffs. For example, Lee et al. \cite{Lee&Park&Sim&Lee:23TSG} introduced a federated RL framework for multi-residential energy scheduling that incorporates time-of-use and demand charge tariffs alongside energy storage and renewable sources. Similarly, \cite{Li&Dong:23AE} developed a decision framework for demand-side management under NEM, aiming at peak load reduction and improved renewable utilization. While RL offers scalability and adaptability to uncertain environments, it generally lacks the interpretability and optimality guarantees of dynamic programming, which can make it difficult to validate solutions or ensure consistent performance across varying operating conditions.

Despite these advancements, a gap remains in the literature for HEMS that co-optimize flexible demands and storage decisions under NEM and demand charges. Although \cite{jin_optimal_2017,Luo&King&Ranzi&Dong:20TSG} addressed demand charges and \cite{Li&Dong:23AE} emphasized NEM and demand management, few studies combine all four elements. This work addresses this gap by proposing a HEMS that maximizes prosumer surplus under NEM, incorporating demand flexibility, storage scheduling, and demand charges, as summarized in Table~\ref{tab:comparison}, and explained in the next section.

\begin{table}[ht]
    \centering
    \renewcommand{\arraystretch}{1.3}
    \caption{Sample of Related Work on Home Energy Management}
    \label{tab:comparison}
    \resizebox{\columnwidth}{!}{
    \begin{tabular}{@{}>{\centering\arraybackslash}m{2cm}cccc@{}}
        \toprule
        \textbf{Related Work} & \textbf{Flexible Demands} & \textbf{Storage} & \textbf{NEM (Buy/Sell)} & \textbf{Demand Charges} \\ 
        \midrule
        \cite{Harshah&Dahleh:15TPS,Zhang&Leibowicz&Hanasusanto:20TSG} & & \checkmark & & \\
        \cite{Chen&Wang&Heo&Kishore:13TSG,Khezri&Mahmoudi&Haque:21TSE} & \checkmark & \checkmark & & \\ 
        \cite{Li&Dong:19TSG} & & \checkmark & \checkmark & \\ 
        \cite{jin_optimal_2017} & & \checkmark & & \checkmark \\ 
        \cite{Luo&King&Ranzi&Dong:20TSG} & \checkmark & \checkmark & & \checkmark \\ 
        \cite{Xu&Tong:17TAC,alahmed_co-optimizing_2024,Jeddi&Mishra&Ledwich:20TSE} & \checkmark & \checkmark & \checkmark & \\ 
        \cite{Lee&Park&Sim&Lee:23TSG} & \checkmark & \checkmark & & \checkmark \\ 
        \cite{Li&Dong:23AE} & \checkmark & & \checkmark & \\ 
        \textbf{This Work} & $\pmb{\checkmark}$ & $\pmb{\checkmark}$ & $\pmb{\checkmark}$ & $\pmb{\checkmark}$ \\ 
        \bottomrule
    \end{tabular}
    }
\end{table}

\subsection{Summary of Results and Contributions} 
 This paper advances research on optimal scheduling of BTM DERs with four key contributions:
\begin{itemize}[leftmargin=*]
    \item We develop a stochastic dynamic programming framework that jointly optimizes flexible demand and BESS scheduling under NEM and demand charge constraints, maximizing prosumer operational surplus.
    \item We show that the optimal policy maintains a threshold structure consistent with previous work \cite{jin_optimal_2017} while demonstrating how the integration of multiple factors, such as demand flexibility, battery storage, and demand charges under bi-directional power flow, leads to prohibitive computational costs of the optimal policy.
    \item To overcome the prohibitive computational costs of the optimal solution, we propose an efficient algorithm that optimally searches for the prosumer's peak demand under a mildly relaxed problem by leveraging key derived properties from the dynamic programming formulation.
    \item Through extensive simulations against baseline approaches and a theoretical maximum, we demonstrate how a computationally efficient algorithm outperforms the benchmarks, including a proximal policy optimization-based reinforcement learning algorithm.
\end{itemize}

The works in \cite{jin_optimal_2017, alahmed_co-optimizing_2024} are most closely related to ours, as both adopt a stochastic dynamic programming framework for the prosumer decision problem. Our approach departs from \cite{jin_optimal_2017} in two key aspects: (i) we model demand as flexible and jointly optimize it with battery storage, and (ii) we allow for bi-directional power and monetary flows between the prosumer and the DU under asymmetric import and export prices. Symmetric NEM import and export prices, {\em i.e.,} NEM 1.0, are known to create utility revenue shortfalls \cite{NEMevolution:23NAS} potentially leading to death spirals \cite{DARGHOUTH2016713} and cross-subsidies \cite{alahmed_integrating_2022}. Consequently, several states, including California, Arizona, Illinois, Georgia, and Utah, have adopted successor NEM tariffs featuring asymmetric import and export prices \cite{DSIRE}.

Compared to \cite{alahmed_co-optimizing_2024}, our model incorporates demand charges in addition to the volumetric charge, introducing stronger temporal coupling across decision stages. From an economic perspective, demand charges improve the cost reflectivity of the tariff and reduce the over-reliance on volumetric charges to recover bundled cost components \cite{NEMevolution:23NAS}.

Lastly, unlike both \cite{jin_optimal_2017, alahmed_co-optimizing_2024}, our work proposes an efficient approximation algorithm that outperforms various benchmarks in achieving a relatively small
solution gap compared to a theoretical upper bound.

\subsection{Mathematical Notations and Paper Organization}

Table \ref{tab:MajorSymbols} lists the major symbols used. Throughout the paper, boldface letters indicate column vectors as in $\bm{x}=(x_1,\ldots, x_n)$ with $\bm{x}^\top$ as its transpose. In particular, $\bm{1}$ is a column vector of all ones.  For a multivariate differentiable function $f$ of $\bm{x}$, we use interchangeably $f(\bm{x})$ and $f(x_1,\ldots,x_n)$, and we use $f'$ and $f''$ to denote its first-order and second-order derivative (if it exists), respectively. We use $f^{-1}$ to denote its inverse function (if it exists). For vectors $\bm{x},\bm{y}$,  $\bm{x} \preceq \bm{y}$ is the element-wise inequality $x_i \le y_i$ for all $i$, and $[\bm{x}]^+, [\bm{x}]^-$ are the element-wise positive and negative parts of vector $\bm{x}$, i.e., $[x_i]^+:=\max\{0,x_i\}$, $[x_i]^- :=-\min\{0,x_i\}$ for all $i$, and $\bm{x}= [\bm{x}]^+ - [\bm{x}]^-$. We denote by $\mathbb{R}_+:=\{ x\in \mathbb{R}: x \geq 0\}$ the set of non-negative real numbers.

The rest of the paper is organized as follows. \sectsign\ref{sec:problem}, provides a mathematical formulation of DERs, tariff under demand charge, system dynamics, and the joint scheduling decision problem. \sectsign \ref{sec:method} presents some analytical results on the dynamic programming problem, and \sectsign\ref{sec:specialcases} proposes an efficient approximation algorithm that circumvents the exponential complexity of solving the dynamic program. A detailed case study using real data from two buildings is presented in \sectsign\ref{sec:case_study}, followed by concluding remarks and future directions in \sectsign\ref{sec:conclusion}.

\begin{table}[ht]
\centering
\caption{Major variables and parameters (alphabetically ordered)}
\label{tab:MajorSymbols}
\vspace{-0.2cm}
\resizebox{\columnwidth}{!}{%
\begin{tabular}{@{}ll@{}}
\midrule \midrule
Symbol                                                   & Description                                                            \\ \midrule
$B$ &  Battery storage capacity.\\
$c_t$ & Prosumer’s peak demand before time $t$.\\
$\bm{d}_t$                                      & Vector of consumption bundle at time $t$.         \\
$d_t^k$                                      & Consumption of device $k$ at time $t$.         \\
$\overline{\bm{d}}$              & Consumption bundle's upper limit.             \\
$e_t$           & Battery storage output at time $t$\\
$\underline{e},\overline{e}$  & Battery storage discharging and charging limits.\\
$g_t$                                           & Renewable generation at time $t$.           \\
$\gamma$ & Salvage value rate.\\                               
$k$                                                 & Index of consumption devices.                                  \\
$P_t(\cdot)$             & Payment function at time $t$.                    \\
$p$ & Peak demand price.\\
$p^+, p^-$                                           & NEM X buy (retail) and sell (export) rates.                            \\
$r_t(\cdot)$ & Reward function at time $t$.\\
$\rho$ & Battery storage discharging efficiency.\\
$s_t$               &       Storage state of charge at time $t$.\\
$t$                                                  & Index of time.  \\   
$\tau$ & Battery storage charging efficiency.\\
$u_t$ & System control action at time $t$.\\
$U_t(\cdot)$                                             & Utility of consumption at time $t$.                                  \\
$v_t$ & Net consumption at time $t$ excluding renewables.\\
$V^\pi, V^\ast$ & Value function under policy $\pi$ and optimal value function.\\
$x_t$ & System state.\\
$z_t$                                           & Net consumption at time $t$. \\
\midrule \midrule
\end{tabular}%
}
\end{table}

\section{Problem Formulation}
\label{sec:problem}
\begin{figure}[t]
    \centering
    \includegraphics[width=0.9\linewidth]{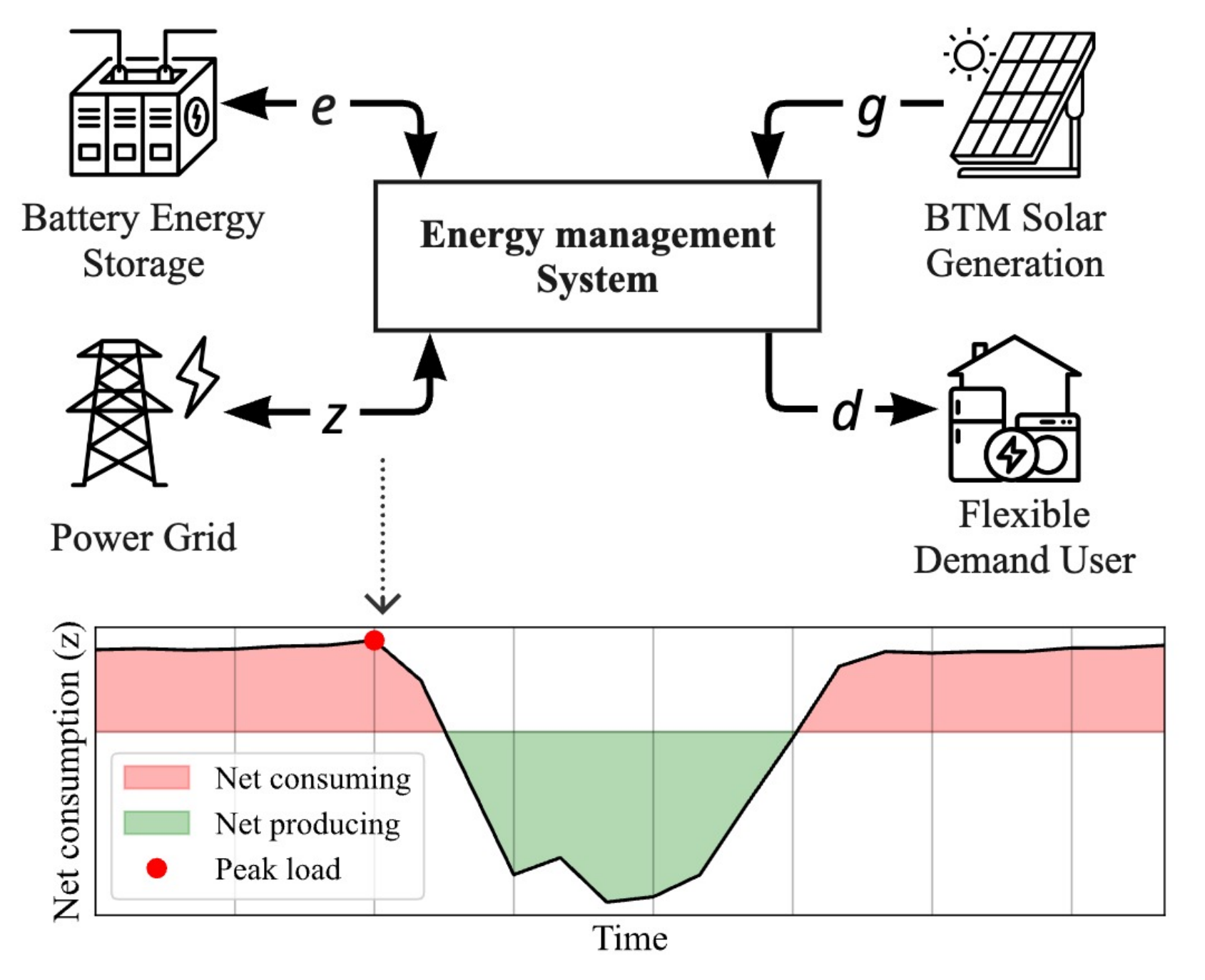}
    \vspace{-0.2cm}
    \caption{DER setup with flexible demand $d \in \mathbb{R}+$, renewable generation $g \in \mathbb{R}+$, storage operation $e \in \mathbb{R}$, and net consumption $z \in \mathbb{R}$. The illustrated NEM tariff includes both energy and demand charges, with costs associated with net consumption, net generation, and the peak load recorded over the daily billing period.}
    \label{fig:der_setup}
\end{figure}

We study a HEMS for prosumers that co-optimizes flexible demand, storage operation, and renewable distributed generation under a NEM tariff with distinct buy/sell rates and peak demand charges. The optimization objective maximizes the operational surplus defined as consumption utility minus electricity costs, where demand charges are levied on the maximum consumption drawn from either load, battery charging, or both. As shown in Fig.~\ref{fig:der_setup}, the energy storage and flexible loads are jointly scheduled to reduce the demand charge by shaving the load curve, and reduce the volumetric charge by arbitraging NEM rates, while still satisfying the prosumer's consumption.

\subsection{Prosumer Resources}
We consider the operation over a discrete and finite time horizon $t=0,1,\ldots, T$. As shown in Fig.\ref{fig:der_setup}, the prosumer has {\em BTM renewable DG}, denoted by $g_t$. The renewable DG is modeled as an exogenous (positive) Markov random process.

The prosumer {\em flexible demands} are denoted by the vector 
\begin{equation}\boldsymbol{d}_t=(d_t^1, \ldots, d_t^K)\in\mathcal{D}\coloneqq\{\boldsymbol{d}:\boldsymbol{0}\preceq \boldsymbol{d}\preceq\overline{\boldsymbol{d}}\}\subseteq\mathbb R^K_+,
\end{equation}
where $K$ is the total number of devices, and $\overline{\boldsymbol{d}}$ is the consumption bundle’s upper limit.\footnote{The minimum consumption limit is set to $\bm{0}$ without loss of generality.} Some important devices might be less flexible or {\em inflexible}. Mathematically, this is incorporated by assigning these devices stricter flexibility limits, {\em i.e.}, for an inflexible device $k$, whose consumption should not deviate from $\Tilde{d}_t^k$, its flexibility limits would be $\Tilde{d}_t^k \leq d_t^k \leq \Tilde{d}_t^k$.



\added{The prosumer also owns a BESS, which has output $e_t\in[-\underline{e}, \overline{e}]$, where $\underline{e}$ and $\overline{e}$ are the discharging and charging limits, respectively}. The output is composed of charging $[e_t]^+=\max\{e_t, 0\}$ and discharging $[e_t]^-=\max\{-e_t, 0\}$ actions. Given charging and discharging efficiencies, $\tau \in(0, 1]$ and $\rho\in(0, 1]$, respectively, the BESS state of charge (SoC) $s_t\in[0, B]$ evolves as: 
\begin{equation}\label{eq:SoC}
    s_{t+1} = s_t + \tau [e_t]^+-[e_t]^-/\rho,\quad t=0,\ldots, T-1,
\end{equation} 
where $B$ denotes battery storage capacity.

The {\em net energy consumption} of the prosumer at time $t$ is
\begin{equation}
    z_t := \bm{1}^\top\boldsymbol{d}_t + e_t - g_t.
\end{equation} 
Therefore, the prosumer imports power from the grid if $z_t>0$ and exports power to the grid if $z_t<0$.

\subsection{System Dynamics}
The {\em system state} at time $t$, $x_t$, is described by $x_t = [s_t, g_t, c_t]$,
where $c_t$ is the prosumer's peak demand before time $t$. The system is controlled by the BESS action, i.e., charging or discharging, and consumption adjustments, hence the control action is $u_t \coloneqq (e_t, \boldsymbol{d}_t)$. The peak demand in the state evolves as
\begin{equation}\label{eq:DCevolution}
    c_{t+1} = \max_{0\le i\le t}z_i=\max\{z_t,c_t\}, \quad t=0,1,\ldots, T-1,
\end{equation}
and the battery SoC evolves as per (\ref{eq:SoC}).

\subsection{NEM Tariff with Demand Charges}
Under NEM, when the prosumer's gross load (from consumption and storage charging) is higher than its generation (from renewable DG and storage discharging), the prosumer pays at the retail rate. Otherwise, the prosumer gets compensated at a predetermined export rate. 
To better reflect the capacity costs incurred by the DU, a demand charge is added to the prosumer payment. The demand charge is a one-time payment based on the peak demand over the billing period.

Let $(p_t^+\ge p_t^-, p) \ge 0$ be the retail, export, and peak demand prices, respectively. The payment under NEM $P_t(\cdot)$ over the considered time horizon is given by 
\begin{equation}\label{eq:paymentAgg}
  \sum_{t=0}^{T-1}  P_t(z_t, c_t) = \sum_{t=0}^{T-1} \left( p_t^+[z_t]^+ - p_t^-[z_t]^-\right)+ p c_T +A,
\end{equation}
where $A$ denotes a fixed charge, commonly known as a {\em connection charge}, that does not vary with consumption \cite{Alahmed&Tong:22IEEETSG}. The payment in (\ref{eq:paymentAgg}) corresponds to a {\em three-part tariff} structure, consisting of volumetric, demand, and fixed components \cite{Nieto:16EJ}.

For the convenience of analysis, we equivalently break down the peak demand charge at the end of time $T$ to each time step $t$ as a rolling-based incremental charge, since
$$p c_T = p\sum_{t=0}^{T-1} \left(c_{t+1}-c_t\right)\overset{(\ref{eq:DCevolution})}{=} p\sum_{t=0}^{T-1}[z_t-c_t]^+.$$ Thus, we can write the payment under NEM X in (\ref{eq:paymentAgg}) for each time $t$ as 
\begin{equation}\label{eq:payment}
    P_t(z_t, c_t) = p_t^+[z_t]^+ - p_t^-[z_t]^- + p[z_t-c_t]^+ + A_t,
\end{equation}
where, for every time $t$, $P_t(\cdot)>0$ ($P_t(\cdot)<0$) indicates that the prosumer pays (gets compensated) to (by) the DU.\footnote{We assume $A_t=0, \forall t$, since fixed costs do not influence DER scheduling decisions, at least in the short-run.}

Special cases of the general NEM tariff in (\ref{eq:payment}) are considered in \cite{alahmed_co-optimizing_2024} when demand charges ($p[z_t-c_t]^+$) are absent and in \cite{jin_optimal_2017} when energy exports are absent ($p_t^-[z_t]^-$).

\subsection{Prosumer Utility Function}
We employ utility functions to formally represent prosumer preferences over consumption bundles, following the standard framework in microeconomic theory \cite{MasColell&Whinston:95BookOxford}. At each time step $t$, the prosumer's utility of consuming $\boldsymbol{d}_t$ is given by an additive utility function
\begin{equation}\label{eq:utility}
    U_t(\boldsymbol{d}_t) = \sum_{i=1}^K U_t(d_t^i), \quad t=0,1,\ldots, T-1
\end{equation}
that is assumed to be concave and continuously differentiable. Without loss of generality, we assume $\nabla U_t(\boldsymbol{d}_t)\ge 0$, $\boldsymbol{d}\in\mathcal D$.

\subsection{DER Co-Optimization Problem}\label{subsec:DERcooptimization}
To optimally schedule its DER, the prosumer solves a dynamic program that maximizes its reward function by explicitly accounting for the temporal coupling of storage operations and a NEM tariff with asymmetric retail and export rates and peak demand charges.

The reward function is defined as the difference between the utility of consumption (\ref{eq:utility}) and payment under NEM and demand charges (\ref{eq:payment}). Therefore, the reward function is

\begin{equation}
\begin{aligned}
    r_t(x_t, u_t) &= r_t((s_t, g_t, c_t), (e_t, \mathbf{d}_t))\\ &= U_t(\boldsymbol{d}_t) - P_t(z_t, c_t),
    \quad t=0,\ldots, T-1.
\end{aligned}
\end{equation}
To encourage the preservation of energy in the battery, we define a linear reward to the remaining energy at the terminal stage $T$, {\em i.e. }
\begin{equation}
    r_T(x_T)=r_T(s_T, g_T, c_T)=\gamma s_T,
\end{equation}
where $\gamma>0$ is the value assigned to the remaining charge in the storage at the end of the horizon.



The policy $\bm{\pi}=(\pi_0, \ldots, \pi_{T-1})$ is a sequence of mappings from state space to action space such that $\pi_t(x_t) = u_t$, for every $t$. Given initial state $x_0=(s, g, 0)$, the storage-consumption co-optimization is then defined as:
\begin{subequations}\label{eq: prob}
\begin{align}
    \max_{\pi}\quad& \mathbb E\left[\sum_{t=0}^{T}r_t(x_t, u_t)\right]\\
    \text{Subject to}&\quad \text{for all}\ t=0, 1,\ldots, T-1\nonumber\\
    & x_0 = (s, g, 0)\\
    &u_t=(e_t, \boldsymbol{d}_t)=\pi_t(x_t)\\
    & e_t\in[-\underline{e}, \overline{e}] \\
    & s_{t+1} = s_t + \tau[e_t]^+-[e_t]^-/\rho \label{SoCevolution}\\
    & 0 \leq s_t \leq B \label{eq:StorageCapacity}\\
    & g_{t+1}\sim F_{\cdot |g_t} \\
    & c_{t+1}=\max\{\boldsymbol{1}^\top\boldsymbol{d}_t+e_t - g_t, c_t\} \label{DCevolution}\\
    & \boldsymbol{0} \preceq \boldsymbol{d}_t\preceq \boldsymbol{\overline{d}}.
\end{align}
\end{subequations}
Constraints \eqref{SoCevolution} and \eqref{DCevolution} induce intertemporal coupling, precluding a myopic (stage-wise) optimal policy. Specifically, relaxing either constraint in isolation is insufficient to eliminate the temporal dependencies, and the resulting optimal policy remains non-myopic (see Theorem \ref{thm: myopic} in Appendix \ref{app:theory}). However, when both constraints are simultaneously relaxed, the problem admits a myopic optimal policy characterized by decoupled per-period decisions (see Lemma~\ref{lem:OptimalRelaxed}).

The value function $V_t^\pi(\cdot)$ is defined as the expected total reward received under the policy $\pi$, i.e., $V_t^\pi(x) = \mathbb E[\sum_{t=0}^T r_t(x_t, \pi(x_t))|x_t=x]$. The optimal value function is defined as $V_t^*(x)=\max_\pi V_t^\pi(x)$.

\section{Dynamic Program Analytical Results}
\label{sec:method}
In this section, we \replaced{show that the optimal policy for the co-optimization problem in \sectsign\ref{subsec:DERcooptimization} is a threshold-based one.}{characterize an optimal threshold policy for the dynamic program formulated in \sectsign\ref{subsec:DERcooptimization}.}

First, we show that the optimal value function is concave.
\begin{lemma}\label{lem: concave}
    The optimal value function $V^*_t(s_t, g_t, c_t)$ is concave in $(s_t, c_t)$ for every time step $t\le T$. 
\end{lemma}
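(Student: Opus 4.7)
My plan is backward induction on $t$, applied to a shifted value function that flips the $c$-monotonicity so the standard composition rules apply. The main obstacle to a direct induction is that $c_{t+1}=\max\{z_t,c_t\}$ is \emph{convex} in $(z_t,c_t)$, while the reward structure makes $V_{t+1}^\ast$ \emph{non-decreasing} in $c_{t+1}$ (a larger past peak reduces the remaining demand charge); composing a concave non-decreasing function with a convex one is not concave in general, so induction on $V_t^\ast$ itself does not close.

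\textbf{Shifted value function.} I would define $\widetilde V_t(s,g,c):=V_t^\ast(s,g,c)-pc$, which is concave in $(s,c)$ iff $V_t^\ast$ is. The algebraic identity $-p[z-c]^++p\max\{z,c\}=pc$ (verified by case analysis on the sign of $z-c$) converts the Bellman recursion into
\begin{align*}
\widetilde V_t(s,g,c)=\max_{u\in\mathcal U(s)}\Bigl\{&U_t(\bm d)-p_t^+[z]^++p_t^-[z]^-\\
&+\mathbb E_{g'}\bigl[\widetilde V_{t+1}\bigl(s+h(e),g',\max\{z,c\}\bigr)\bigr]\Bigr\},
\end{align*}
where $h(e):=\tau[e]^+-[e]^-/\rho$ and $z:=\bm 1^\top\bm d+e-g$. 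The per-stage reward no longer depends on $c$ explicitly, and $c$ enters only through $\max\{z,c\}$ inside $\widetilde V_{t+1}$.

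\textbf{Strengthened induction.} I would prove by backward induction that $\widetilde V_t$ is (i) jointly concave in $(s,c)$, (ii) non-decreasing in $s$, and (iii) \emph{non-increasing} in $c$. The base case $\widetilde V_T(s,g,c)=\gamma s-pc$ is linear in $(s,c)$ and satisfies all three. For the inductive step, (ii) and (iii) propagate directly since $s+h(e)$ is non-decreasing in $s$, $\max\{z,c\}$ is non-decreasing in $c$, and partial maximization preserves monotonicities. For (i), I would invoke the multivariate concave-composition rule: $s+h(e)$ is concave in $(s,e)$ (using $\tau\rho\le1$) while $\widetilde V_{t+1}$ is concave and non-decreasing in its first argument, and $\max\{z,c\}$ is convex in $(z,c)$ while $\widetilde V_{t+1}$ is concave and non-increasing in its last argument. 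Combined with the concavity of $U_t(\bm d)$ and of $-p_t^+[z]^++p_t^-[z]^-$ (the latter using $p_t^+\ge p_t^-$), the objective inside the max is jointly concave in $(s,c,\bm d,e)$. The feasible set reduces to a polyhedron in $(s,\bm d,e)$ (the storage cap $s+h(e)\le B$ is vacuous during discharging and linear during charging), so partial maximization of a jointly concave objective over a convex set preserves concavity in $(s,c)$, yielding (i).

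\textbf{Conclusion.} Finally, $V_t^\ast(s,g,c)=\widetilde V_t(s,g,c)+pc$ differs from $\widetilde V_t$ by a linear function, so concavity in $(s,c)$ transfers back. The hard part is the initial observation that naive induction fails due to the sign mismatch between $V^\ast$'s monotonicity in $c$ and the convexity of the $c$-dynamics; the $\widetilde V_t$-substitution is the key trick, after which all steps are standard DP composition arguments.
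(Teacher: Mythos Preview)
Your proposal is correct and takes a genuinely different route from the paper. The paper proves concavity by a direct trajectory-averaging argument: fix $g_\tau$, take two initial states $(s_\tau,c_\tau)$ and $(s_\tau',c_\tau')$ together with their optimal action sequences, apply the averaged action sequence $\tilde u_t=(u_t+u_t')/2$ from the midpoint state, and verify termwise that $U_t$ is concave, the volumetric cost $p_t^+[z]^+-p_t^-[z]^-$ is convex, and the telescoped demand-charge term $p\max_t z_t$ is convex. No Bellman recursion or composition rule is invoked.

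Your approach instead closes the induction by the substitution $\widetilde V_t=V_t^\ast-pc$, which converts the demand-charge telescoping into the identity $-p[z-c]^++p\max\{z,c\}=pc$ and, crucially, makes $\widetilde V_t$ \emph{non-increasing} in $c$; this is exactly the monotonicity needed for the concave composition $\widetilde V_{t+1}(\,\cdot\,,g',\max\{z,c\})$ to go through. What your approach buys is a cleaner treatment of feasibility under the nonlinear SoC map $h(e)=\tau[e]^+-[e]^-/\rho$: you only need the joint feasible set in $(s,c,\bm d,e)$ to be convex (which it is, since on the slab $0\le s\le B$ the SoC constraints reduce to the linear inequalities $-\rho s\le e\le (B-s)/\tau$), whereas the averaging argument must ensure that the averaged action sequence remains feasible at every step, and the concavity of $h$ can push $\tilde s_t$ strictly above $(s_t+s_t')/2$ so that the cap $\tilde s_{t+1}\le B$ is not inherited automatically from the two original trajectories. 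You also obtain the monotonicity claims of the paper's Lemma~2 for free as part of the strengthened hypothesis. The paper's argument, on the other hand, is shorter and requires no bookkeeping of monotonicities or composition rules once feasibility is granted.
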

\begin{proof}
We follow the proof technique for Lemma 1 in \cite{jin_optimal_2017}. 

\replaced{Consider two different states $x_\tau$ and $x_\tau'$ with their corresponding optimal trajectories $\{(x_t, u_t)\}_{t=\tau}^T$ and $\{(x_t', u_t')\}_{t=\tau}^T$.}{Let $x_\tau^{(0)}$ and $x_\tau^{(1)}$ be two different states and $\{(x_t^{(0)}, u_t^{(0)})\}_{t=\tau}^T$ and $\{(x_t^{(1)}, u_t^{(1)})\}_{t=\tau}^T$ be their corresponding optimal trajectory.} \replaced{Let $\tilde x_\tau=(x_\tau+x_\tau')/2$ be their average initial state, and define the average policy $\tilde \pi=\{\tilde u_t=(u_t+u_t')/2\}_{t=\tau}^T$ with corresponding trajectory $\{\tilde x_t\}_{t=\tau}^T$.}{We consider the average initial state $\tilde x_\tau=(x_\tau^{(0)}+x_\tau^{(1)})/2$ and the trajectory $\{\tilde x_t\}_{t=\tau}^T$ under average policy $\tilde \pi=\{\tilde u_t=(u_t^{(0)}+u_t^{(1)})/2\}_{t=\tau}^T$.}

\replaced{First, by the concavity of $U_t$:}{By definition, $U_t$ is concave:}
\begin{equation}
    (U_t(\bm{d}_t) + U_t(\bm{d}_t'))/2\le U_t((\bm{d}_t+\bm{d}_t')/{2})=U_t(\tilde{\bm{d}}_t).\label{eq: u}
\end{equation}
\replaced{The cost $P_t(z_t, c_t)$ consists of two components: the energy cost $p_t^+[z_t]^+-p_t^-[z_t]^-$ and the demand charging $p[z_t-c_t]^+$. Since $p_t^+\ge p_t^-\ge0$, the energy cost is convex.}{We separate the $P_t(z_t, c_t)$ into energy cost $p_t^+[z_t]^+-p_t^-[z_t]^-$ and the demand charging cost $p[z_t-c_t]^+$. By assumption, $p_t^+\ge p_t^-\ge 0$, and $p_t^+[0]^+=p_t^-[0]^-=0$, the left derivative at $0$ is smaller than the right derivative at $0$. So, the derivative of $p_t^+[z_t]^+ - p_t^-[z_t]^-$ is non-decreasing and hence $p_t^+[z_t]^+ - p_t^-[z_t]^-$ is convex.} Then, 
\begin{equation}
\begin{aligned}
    &-\sum_{t=\tau}^T(p_t^+[z_t]^+-p_t^-[z_t]^-+p_t^+[z_t']^+-p_t^-[z_t']^-)/ 2\\
    &\le -\sum_{t=\tau}^T p_t^+[(z_t + z_t')/2]^+-p_t^-[(z_t+z_t')/2]^-\\
    &=-\sum_{t=\tau}^Tp_t^+[\tilde z_t]^+-p_t^-[\tilde z_t]^-,\label{eq: energy}
\end{aligned}
\end{equation}
and the demand charging cost becomes
\begin{equation}
    \begin{aligned}
        &-\sum_{t=\tau}^T(p[z_t-c_t]^++ p[z_t'-c_t']^+)/2 \\
        =&-p(\max_{\tau\le t\le T}\{z_t\}-c_\tau+\max_{\tau\le t\le T}\{z_t'\}-c_\tau')/2\\
        \le&-p(\max_{\tau\le t\le T}\{(z_t + z_t')/2\} - (c_\tau+c_\tau')/2)\\
        =&-p(\max_{\tau\le t\le T}\{\tilde z_t\} - \tilde c_\tau)=\sum_{t=\tau}^Tp[\tilde z_t - \tilde c_t]^+.
    \end{aligned}\label{eq: charge}
\end{equation}
Combining (\ref{eq: u})-(\ref{eq: charge}), 
\begin{equation}
    V^*_\tau(\tilde x_\tau)\ge V_\tau^{\tilde\pi}(\tilde x_\tau)\ge (V_\tau^*(x_\tau) + V_\tau^*(x_\tau'))/2,
\end{equation}
which proves the concavity of the optimal value function. 
\end{proof}

\added{It turns out that the optimal value function is also non-decreasing in the system state, which is formally stated in the next lemma.}
\begin{lemma}\label{lem:monotonicity}
    The optimal value function $V_t^*(s_t,g_t,c_t)$ is non-decreasing in $s_t$\added{, $g_t$,} and $c_t$ for every time step $t\le T$.
\end{lemma}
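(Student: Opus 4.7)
I would prove the three monotonicity claims by direct pathwise coupling arguments, exploiting the telescoping identity $p c_T = p c_t + p\sum_{s=t}^{T-1}[z_s - c_s]^+$, which expresses the remaining peak demand charge from stage $t$ onward as $p[\max_{t\le s\le T-1} z_s - c_t]^+$ and decouples it from the stage-wise accounting.

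Monotonicity in $c_t$ is the easiest: given $c_t' > c_t$, I would apply the same policy $\pi^*$ in both trajectories starting from $(s_t, g_t, c_t)$ and $(s_t, g_t, c_t')$. Because the transitions of $s$ and $g$ do not depend on $c$, the $s$-trajectory, $g$-trajectory, and action sequence are identical; hence the consumption utility and total energy cost $\sum_s (p_s^+[z_s]^+ - p_s^-[z_s]^-)$ are equal in the two runs, while the remaining demand charge $p[\max_s z_s - c_t]^+$ is non-increasing in $c_t$, so the total reward can only rise.

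For monotonicity in $g_t$, I would couple the exogenous sample paths so that $g_s' \ge g_s$ almost surely for all $s \ge t$; this uses a stochastic-monotonicity property of the kernel $F_{\cdot | g}$, which should be recorded as a mild structural assumption. Applying the same policy in both trajectories yields $z_s' \le z_s$ at every stage. Since $P_t(\cdot, c_t)$ is non-decreasing in $z$ (it is the sum of $p_t^+[z]^+$ and $-p_t^-[z]^-$ with $p_t^+ \ge p_t^-$, plus $p[z - c_t]^+$), each per-stage energy cost is smaller in the coupled run, and $\max_s z_s' \le \max_s z_s$ delivers a smaller total demand charge, while the consumption utility is unchanged.

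The main obstacle is monotonicity in $s_t$, since the feasible set for $e_s$ itself depends on $s_s$ through $0 \le s_{s+1} \le B$, making the naive same-action coupling potentially infeasible. Given $s_t' > s_t$, I would take the optimal trajectory for $s_t$ and, in the $s_t'$-run, reuse the same action $u_s^*$ whenever admissible; whenever $s_{s+1}' = s_s' + \tau[e_s^*]^+ - [e_s^*]^-/\rho$ would exceed $B$, I shrink $e_s$ just enough to pull $s_{s+1}'$ down to $B$. A short induction on $s$ shows that $s_s' \in [0, B]$ is preserved, $s_s' \ge s_s$ holds throughout (so in particular $s_T' \ge s_T$), and $z_s' \le z_s$ at every stage (with equality off the adjustment set and strict inequality on it). The telescoping identity then bounds the total energy cost and total demand charge of the modified policy by those of $\pi^*$, while the preserved terminal gap gives $\gamma s_T' \ge \gamma s_T$, yielding $V_t^*(s_t', g_t, c_t) \ge V_t^*(s_t, g_t, c_t)$. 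The delicate part is tracking $\tau$ and $\rho$ correctly through each adjustment and verifying that the invariants $s_s' \ge s_s$ and $z_s' \le z_s$ hold simultaneously along the whole trajectory.
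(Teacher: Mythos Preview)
Your approach and the paper's share the same high-level idea---take an optimal action sequence from the ``worse'' state and exhibit a feasible sequence from the ``better'' state with at least as much reward---and the $c_t$ case is effectively identical. The execution differs in the other two coordinates. For $s_t$, the paper works with an infinitesimal gap $\epsilon$ and modifies the action at a \emph{single} time $\tau_0$ (reducing $e_{\tau_0}'$ by $\epsilon$), whereas you carry the larger SoC along the entire trajectory and clip the charging action whenever the cap $B$ would be violated; your multi-step clipping is cleaner because it directly gives $z_s'\le z_s$ at every stage and avoids the feasibility question the paper leaves open (why $s_\tau'+\epsilon\le B$ for $\tau<\tau_0$). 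For $g_t$, the paper only compares the immediate stage $t$ and is silent about how the Markov kernel $F_{\cdot|g_t}$ affects $g_{t+1},\ldots,g_{T-1}$; you close that gap by coupling the future renewables under an explicit stochastic-monotonicity assumption on $F_{\cdot|g}$, which you correctly flag as an extra structural hypothesis not stated in the paper. Both refinements buy rigor at essentially no cost.
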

\begin{proof}
    (1) $s_t$: Consider two SoCs $s_t$, $s_t'$ such that $s_t=s_t'+\epsilon$, $0<\epsilon\ll \underline{e}$. Let $\{u_\tau'=(e_\tau', \bm{d}_\tau')\}_{\tau=t}^T$ be the optimal action sequence for SoC $s_t'$. If there exists $t\le \tau_0< T$, $e_{\tau_0}'\ge -\underline{e}+\epsilon$, action sequence $\{u_\tau=(e_\tau'-\epsilon\cdot\mathbf 1[\tau=\tau_0], \bm{d}_\tau')\}_{\tau=t}^T$ is feasible for SoC $s_t$. Hence $V^*_t(s_t, g_t, c_t)-V^*_t(s_t',g_t,c_t)\ge P_{\tau_0}(z_{\tau_0}', c_{\tau_0}')-P_{\tau_0}(z_{\tau_0}'-\epsilon, c_{\tau_0}')\ge 0$. Otherwise, $e_\tau'<-\underline{e}+\epsilon<0$, $t\le\tau\le T$, so action sequence $\{u_\tau'\}_{\tau=t}^T$ is feasible for SoC $s_t$, $V^*_t(s_t, g_t, c_t)-V^*_t(s_t',g_t,c_t)\ge \gamma(s_T-s_T')=\gamma\epsilon>0$.    
    
    (2) $g_t$: Consider two generations $g_t=g_t'+\epsilon$, $\epsilon>0$. Let $\pi'=\{u_\tau'=(e_\tau', \bm{d}_\tau')\}_{\tau=t}^T$ be the optimal action sequence for generation $g_t'$. Apply the action sequence to generation $g_t$, $z_{t+1}=e_t'+\bm{1}^\top\bm{d}_t'-g_t=z_{t+1}'-\epsilon$, $V_t^*(s_t, g_t, c_t)-V_t^*(s_t, g_t', c_t)\ge P_t(z_t', c_t) - P_t(z_t'-\epsilon, c_t)\ge 0$.     
    
    (3) $c_t$: From (\ref{eq:payment}), $P_t(z_t,c_t)$ is non-increasing in $c_t$, hence $V_t^*(s_t,g_t, c_t)$ is non-decreasing in $c_t$.
\end{proof}

\added{Lastly, we derive a property that links the optimal $Q$-function, defined as $Q_t^*(x_t, u_t) = r(x_t, u_t) + \mathbb E[V_{t+1}^*(x_{t+1})]$, to the system's control action $u_t$.}
\begin{lemma}\label{lem:Qfunction}
The optimal $Q$-function $Q_t^*(x_t, u_t)$ is concave in $u_t$.     
\end{lemma}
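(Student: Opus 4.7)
I would write $Q^*_t$ as a sum of functions, each concave in $u_t = (e_t,\bm{d}_t)$. For the stage reward, $U_t(\bm{d}_t)$ is concave by assumption; since $z_t = \bm{1}^\top\bm{d}_t + e_t - g_t$ is affine in $u_t$, concavity of $-P_t(z_t,c_t)$ reduces to convexity in $z_t$ of $p_t^+[z_t]^+ - p_t^-[z_t]^-$ (piecewise linear with non-decreasing slope, since $p_t^+ \ge p_t^- \ge 0$) and of $p[z_t-c_t]^+$.

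The difficulty lies in the continuation term $\mathbb{E}[V^*_{t+1}(s_{t+1},g_{t+1},c_{t+1})]$. A direct composition argument breaks down: $s_{t+1}(e_t) = s_t + \tau[e_t]^+ - [e_t]^-/\rho$ is concave in $e_t$ (the charging slope $\tau \le 1$ is no larger than the discharging slope $1/\rho \ge 1$), whereas $c_{t+1}(u_t) = \max\{z_t,c_t\}$ is convex in $u_t$. Since $V^*_{t+1}$ is jointly concave and non-decreasing in $(s,c)$ by Lemmas~\ref{lem: concave}--\ref{lem:monotonicity}, composing it with a function that is convex in $c_{t+1}$ is not automatically concave.

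The plan is to absorb the peak-demand cost into a modified continuation value. Using the identity $[z_t-c_t]^+ = c_{t+1}-c_t$, I rewrite
\begin{equation*}
Q^*_t(x_t,u_t) \;=\; U_t(\bm{d}_t) - p_t^+[z_t]^+ + p_t^-[z_t]^- + p\,c_t + \mathbb{E}\!\left[\,W_{t+1}(s_{t+1},g_{t+1},c_{t+1})\,\right],
\end{equation*}
where $W_{t+1}(s,g,c) := V^*_{t+1}(s,g,c) - p\,c$. By Lemma~\ref{lem: concave}, $W_{t+1}$ is jointly concave in $(s,c)$, and by Lemma~\ref{lem:monotonicity} it is non-decreasing in $s$. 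The crucial new property I need is that $W_{t+1}$ is non-increasing in $c$, equivalently $V^*_{t+1}(s,g,c+\epsilon) - V^*_{t+1}(s,g,c) \le p\,\epsilon$. I would establish this by backward induction on $t$ through the Bellman recursion: the base case is trivial since $V^*_T = \gamma s_T$ does not depend on $c$, and in the induction step, letting $u^*$ be optimal at $(s,g,c+\epsilon)$ and applying it at $(s,g,c)$, one checks that the extra demand charge is bounded by $p([z^*-c]^+ - [z^*-c-\epsilon]^+)$ and the increment in $c_{t+1}$ is $\max(z^*,c+\epsilon)-\max(z^*,c)$; a case analysis shows these two nonnegative terms always sum to exactly $\epsilon$, so the inductive hypothesis on $V^*_{t+1}$ yields an overall gap of at most $p\epsilon$.

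Given these three properties of $W_{t+1}$, concavity of $W_{t+1}(s_{t+1}(u),g,c_{t+1}(u))$ in $u$ follows for each fixed $g$. For $u = \lambda u^{(1)} + (1-\lambda)u^{(2)}$, set $\hat s := \lambda s_{t+1}(u^{(1)}) + (1-\lambda)s_{t+1}(u^{(2)})$ and $\hat c := \lambda c_{t+1}(u^{(1)}) + (1-\lambda)c_{t+1}(u^{(2)})$; then concavity of $s_{t+1}$ gives $s_{t+1}(u) \ge \hat s$, and convexity of $c_{t+1}$ gives $c_{t+1}(u) \le \hat c$. Monotonicity of $W_{t+1}$ (up in $s$, down in $c$) implies $W_{t+1}(s_{t+1}(u),g,c_{t+1}(u)) \ge W_{t+1}(\hat s,g,\hat c)$, and joint concavity implies $W_{t+1}(\hat s,g,\hat c) \ge \lambda W_{t+1}(s_{t+1}(u^{(1)}),g,c_{t+1}(u^{(1)})) + (1-\lambda) W_{t+1}(s_{t+1}(u^{(2)}),g,c_{t+1}(u^{(2)}))$. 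Since $g_{t+1}$ is exogenous, expectation preserves concavity. The main obstacle is the Lipschitz bound on $V^*_{t+1}$ in $c$; once it is in hand, the remainder is careful bookkeeping of the monotonicity directions.
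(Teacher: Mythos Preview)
Your proposal is correct and takes a genuinely different route from the paper. The paper's proof reuses the trajectory-averaging technique from \autoref{lem: concave}: fix $x_\tau$, take two initial actions $u_\tau, u_\tau'$ together with their respective optimal continuations $\{u_t\},\{u_t'\}$, apply the averaged action sequence $\{(u_t+u_t')/2\}$ from $x_\tau$, and invoke the reward inequalities (\ref{eq: u})--(\ref{eq: charge}) to conclude $Q^*_\tau\bigl(x_\tau,(u_\tau+u_\tau')/2\bigr) \ge \tfrac12\bigl(Q^*_\tau(x_\tau,u_\tau)+Q^*_\tau(x_\tau,u_\tau')\bigr)$. In particular, the demand-charge piece is handled by telescoping $\sum_t p[z_t-c_t]^+$ to $p(\max_t z_t - c_\tau)$ and using convexity of the $\max$, rather than by composing $V^*_{t+1}$ with $c_{t+1}(u_t)$ stepwise. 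Your approach instead works one Bellman step at a time: you shift the continuation value to $W_{t+1}=V^*_{t+1}-pc$, prove a $p$-Lipschitz upper bound on $V^*_{t+1}$ in $c$ so that $W_{t+1}$ is non-increasing in $c$, and then compose a jointly concave, mixed-monotone $W_{t+1}$ with the concave map $e_t\mapsto s_{t+1}$ and the convex map $u_t\mapsto c_{t+1}$. The paper's argument is shorter and directly recycles the machinery of \autoref{lem: concave}; yours is more explicit about the nonlinear state transitions, yields the Lipschitz bound on $V^*_{t+1}$ as a standalone fact, and only needs feasibility of the single averaged action at stage $t$ (immediate from convexity of $\mathcal{F}_t$) rather than feasibility of the entire averaged SoC trajectory over the remaining horizon.
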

\begin{proof}
For $u_\tau\neq u_\tau'$, let $\{(x_t, u_t)\}_{t=\tau}^T$ be the trajectory of $Q_\tau^*(x_\tau, u_\tau)$ and $\{(x_t', u_t')\}_{t=\tau}^T$ be the trajectory of $Q_\tau^*(x_\tau, u_\tau')$. By (\ref{eq: u}) - (\ref{eq: charge}), applying actions $\{(u_t+u_t')/2\}_{t=\tau}^T$ to $x_\tau$ gives:
\begin{equation*}
    \begin{aligned}
        Q_\tau^*(x_\tau, u_\tau)+Q_\tau^*(x_\tau, u_\tau')
        \le&\mathbb E\left[\sum_{t=\tau}^Tr_t(x_t, (u_t+u_t')/2)|x_\tau\right]\\
        \le& Q_\tau^*(x_\tau, (u_\tau+u_\tau')/2).
    \end{aligned}
\end{equation*}
    
Therefore, $Q_t^*(x_t, u_t)$ is concave in $u_t$.
\end{proof}

Building on the results of Lemma~\ref{lem: concave}, Lemma~\ref{lem:monotonicity}, and Lemma~\ref{lem:Qfunction}, we establish Theorem~\ref{thm: opt}, which describes the structure of the optimal co-optimization policy under demand charges.
\begin{theorem}\label{thm: opt}
    For each step $t=0, 1, \ldots, T-1$, the optimal policy is a threshold policy. 
\end{theorem}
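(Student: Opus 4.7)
The plan is to characterize the optimizer of the per-stage Bellman problem $V_t^*(x_t)=\max_{u_t} Q_t^*(x_t,u_t)$ via its Karush--Kuhn--Tucker conditions and then read off the threshold structure from the piecewise-linear payment $P_t$. By \autoref{lem:Qfunction}, $Q_t^*$ is concave in $u_t$ on the box feasible set (with the SoC bounds $0\le s_{t+1}\le B$ further tightening the admissible $e_t$), so the first-order conditions together with complementary slackness are both necessary and sufficient for optimality.

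First, writing $z_t=\bm 1^\top\bm d_t + e_t - g_t$, the payment $P_t(z_t,c_t)$ is piecewise linear in $z_t$ with slopes $p_t^-$, $p_t^+$, and $p_t^+ + p$ on the three regimes $z_t<0$, $0<z_t<c_t$, and $z_t>c_t$. Within any one regime, the stationarity conditions of $Q_t^*$ in $(e_t,\bm d_t)$ reduce to matching each marginal utility $U_t'(d_t^{k*})$ and the marginal storage value against a common scalar shadow price $\lambda_t^*$, whose value is $p_t^-$, $p_t^+$, or $p_t^+ + p$ corrected by the non-negative term $\mathbb E[\partial V_{t+1}^*/\partial c_{t+1}]$ arising from $c_{t+1}=\max\{z_t,c_t\}$. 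Because $U_t$ is concave and $V_{t+1}^*$ is concave in $(s_{t+1},c_{t+1})$ by \autoref{lem: concave}, the maps $U_t'$ and $\partial V_{t+1}^*/\partial s_{t+1}$ are monotone and hence invertible on their image; clipping the inverse by the box constraints expresses each $d_t^{k*}$ and $e_t^*$ as a non-increasing piecewise (step) function of $\lambda_t^*$. Since $\lambda_t^*$ itself takes only the three regime-dependent values, every component of $u_t^*$ becomes a threshold function of the state.

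Next I would translate the regime boundaries $z_t=0$ and $z_t=c_t$ into thresholds on $(s_t, g_t, c_t)$, using the sign information in \autoref{lem:monotonicity} to confirm that, as any single state coordinate is swept, the optimizer moves across the three regimes at well-defined threshold levels. The main obstacle is the kink of $P_t$ at $z_t=c_t$ combined with the nonsmooth recursion $c_{t+1}=\max\{z_t,c_t\}$: on this boundary the subdifferential of $P_t$ is the whole interval $[p_t^+, p_t^+ + p]$, and the correct selection within it is pinned down by $\mathbb E[\partial V_{t+1}^*/\partial c_{t+1}]$, which is non-negative by \autoref{lem:monotonicity} and non-increasing in $c_{t+1}$ by \autoref{lem: concave}. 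Verifying that this boundary selection is mutually consistent across the $K+1$ components of $u_t^*$, which are coordinated only through the single shadow price $\lambda_t^*$, is the main technical burden; once it is resolved, the resulting piecewise characterization of $u_t^*$ as a function of $x_t$ is exactly the claimed threshold policy.
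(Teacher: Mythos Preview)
Your proposal and the paper's proof share the same core: both invoke \autoref{lem:Qfunction} to say the per-stage Bellman objective is concave on a box, so first-order (KKT) conditions characterize the optimum, and the resulting case analysis is the ``threshold policy.'' The difference is one of resolution. The paper's argument is extremely brief: it only observes that a concave maximum over the compact box $\mathcal F_t$ is attained either in the interior (unconstrained stationarity) or on a boundary face, and that enumerating which box constraints bind already constitutes a threshold structure. It never introduces a shadow price, never names the three $P_t$-regimes, and never translates the thresholds back to the state $(s_t,g_t,c_t)$.

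Your route through the common shadow price $\lambda_t^*$ and the piecewise-linear regimes of $P_t$ is more ambitious and would yield the finer, state-variable threshold characterization one sees in \autoref{lem:OptimalRelaxed} for the relaxed problem. That buys interpretability but, as you correctly flag, forces you to resolve the subdifferential selection at the kinks. Two details to tighten if you pursue it: (i) the SoC dynamics~\eqref{eq:SoC} introduce an additional kink at $e_t=0$ through the efficiencies $\tau,\rho$, so the marginal storage value $\mathbb E[\partial V_{t+1}^*/\partial s_{t+1}]\cdot\partial s_{t+1}/\partial e_t$ is itself piecewise, adding a regime boundary you do not list; and (ii) in the peak-setting regime $z_t>c_t$, the correction $\mathbb E[\partial V_{t+1}^*/\partial c_{t+1}]$ is evaluated at $c_{t+1}=z_t$, which depends on the action, so $\lambda_t^*$ is not literally one of three fixed values there but varies with $z_t$. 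Neither point breaks the threshold conclusion, but both need to be handled for the explicit characterization to go through.
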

\begin{proof}
    The feasible region for action $u_t$, $(e_t, \bm{d}_t)\in \mathcal F_t= [\max\{-\underline{e}, -\rho s_t\}, \min\{\overline{e}, (B-s_t)/\tau\}]\times\prod_{i=1}^K[0, d_i]$ is a compact set\replaced{.}{ with $(K+1)2^K$ boundary segments.} From Lemma~\ref{lem: concave} to Lemma~\ref{lem:Qfunction}, the optimal value function and $Q$-function is concave in action $u_t$, so the optimal action
    \begin{equation}
        u_t^*=\arg\max_{u\in\mathcal F_t} \{r_t(x_t, u) + Q_t^*(x_t, u)\}
    \end{equation}
    \replaced{must exist in the interior of the feasible set $\mathcal F_t$ or at its boundary. The optimal action can be obtained by solving concave optimizations with and without boundary constraints, which results in a threshold policy.}{ So, we can get the optimal solution by solving $(K+1)2^K$ constrained concave optimization and one unconstrained concave optimization, which results in a threshold solution of $(K+1)2^K+1$ cases.}
\end{proof}
Although Theorem~\ref{thm: opt} reveals a structured policy, direct dynamic programming implementation remains impractical. If we directly apply concave optimization on the compact sets, it necessitates gradients of $r_t$ and $Q_t$, which are computationally prohibitive, especially for complex utility functions that may not be additive with respect to consuming devices. On the other hand, if we want to avoid the expensive optimization by backtracking, the continuous-space $Q$-function tracking requires costly discretization, as validated in simpler settings in~\cite{jin_optimal_2017}. To this end, we develop, in \sectsign\ref{sec:specialcases}, an efficient approximation algorithm to overcome these computational barriers.

\section{An Efficient Approximation Algorithm}
\label{sec:specialcases}

Since obtaining the exact optimal solution to problem (\ref{eq: prob}) is computationally intractable, we develop an efficient approximation algorithm. The proposed large storage peak searching (LSPS) algorithm efficiently searches for the optimal solution by solving a relaxed version of the original problem. The resulting solution is then projected onto the feasible region $\mathcal{F}_t$, producing near-optimal and computationally tractable approximations. The original formulation prevents myopic solutions through constraints (\ref{SoCevolution}) and (\ref{DCevolution}). To address this computational challenge, we first relax constraint (\ref{SoCevolution}) by considering a large storage capacity regime (see \sectsign\ref{sec: relax_11e}), then develop an efficient method to handle constraint (\ref{DCevolution}) (see \sectsign\ref{sec: relax_11h}).

\subsection{Relaxation of the DER Co-Optimization Problem}\label{sec: relax_11e}
We consider a large storage capacity regime as a relaxation of the battery capacity constraint (\ref{SoCevolution}). Observing that $e_t$ and $\bm{d}_t$ consistently appear together in the expression $\mathbf{1}^\top\bm{d}_t+e_t$, we introduce a helper function $h_t(v_t)$:
\begin{equation}\label{eq: helper}
    \begin{aligned}
        h_t(v_t) := \max\{&U_t(\bm{d}_t)+\gamma e_t: \bm{1}^\top\bm{d}_t+e_t=v_t, \\
        &\bm{0}\preceq \bm{d}_t\preceq \overline{\bm{d}}, -\underline{e}\le e_t\le \overline{e}\},
    \end{aligned}
\end{equation}
where $v_t$ represents the net consumption excluding renewables at time step $t$. 
The function $h_t(v_t)$ is differentiable, monotonically increasing, and concave for all time steps $t$. For any $v_t\in[-\underline{e}, \overline{e}+\bm{1}^\top\bm{\overline{d}}]$, there exists a valid pair $(e_t, \bm{d}_t)$ that achieves $h_t(v_t)$ with $\mathbf{1}^\top\bm{d}_t+e_t=v_t$. This equivalence allows us to optimize over $v_t$ after relaxing constraint (\ref{SoCevolution}).

Using the helper function (\ref{eq: helper}), we formulate the relaxed problem as
\begin{subequations}\label{prob: large}
    \begin{align}
        \max_{v_0,\ldots, v_{T-1}}\quad & \sum_{t=0}^{T-1} h_t(v_t)-p_t^+[v_t-g_t]^+ + p_t^-[v_t-g_t]^- \nonumber\\
        &- p[v_t-g_t-c_t]^+\\
        \text{Subject to}\quad & -\underline{e}\le v_t\le \overline{e} +\bm{1}^\top\overline{\bm{d}}\\
        &c_{t+1}=\max\{c_t, v_t-g_t\}.\label{eq: peak_demand_v}
    \end{align}
\end{subequations}
To eliminate the temporal dependencies introduced by constraint (\ref{DCevolution}), we consider a fixed peak demand bound $c$ in the large capacity regime (\ref{prob: large})
\begin{subequations}\label{prob: peak}
    \begin{align}
        \max_{v_0,\ldots, v_{T-1}}\quad & \sum_{t=0}^{T-1} h_t(v_t)-p_t^+[v_t-g_t]^+ + p_t^-[v_t-g_t]^--pc\\
        \text{Subject to}\quad & -\underline{e}\le v_t\le \overline{e} +\bm{1}^\top\overline{\bm{d}}, \label{eq: hard_bound}\\
        & v_t\le g_t+c. \label{eq: peak_bound}
    \end{align}
\end{subequations}
This reformulation replaces the recursive peak demand update (\ref{eq: peak_demand_v}) with a single-step upper-bound constraint (\ref{eq: peak_bound}), effectively removing temporal dependencies while preserving the optimal policy structure. We show that the optimal policy of (\ref{prob: large}) is preserved under this relaxation.
\begin{theorem}\label{thm: equal}
    Suppose $\{v_t^*\}$ and $c^*$ are optimal actions and peak demand for (\ref{prob: large}), then $\{v_t^*\}$ is also optimal for (\ref{prob: peak}) with demand bound $c=c^*$.
\end{theorem}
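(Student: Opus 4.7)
My plan is to prove \autoref{thm: equal} by contradiction, leveraging the fact that in \eqref{prob: large} the per-stage demand-charge terms telescope to a function of the terminal peak only, so fixing the peak to $c^*$ collapses the problem into \eqref{prob: peak}. First I would verify feasibility: since $c^*$ is by definition the peak induced by the optimal trajectory of \eqref{prob: large}, i.e., $c^* = \max\{c_0, \max_t(v_t^* - g_t)\}$, the sequence $\{v_t^*\}$ automatically satisfies constraint \eqref{eq: peak_bound} with $c = c^*$, while the box constraint \eqref{eq: hard_bound} is common to both problems.

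Next I would relate the two objectives. Using the telescoping identity $\sum_{t=0}^{T-1}[z_t - c_t]^+ = c_T - c_0$ for $z_t := v_t - g_t$ and $c_t$ updated via \eqref{eq: peak_demand_v} (the same identity invoked just before \eqref{eq:payment}), the large-capacity objective rewrites as $J_L(\{v_t\}) = \Phi(\{v_t\}) - p(c_T(\{v_t\}) - c_0)$, where $\Phi$ collects the utility and volumetric-charge terms. The peak objective is $J_P(\{v_t\}, c^*) = \Phi(\{v_t\}) - pc^*$. Hence, for any $\{v_t\}$ feasible in \eqref{prob: peak} with bound $c^*$ (which forces the induced peak $c_T(\{v_t\}) \le c^*$),
\begin{equation*}
J_L(\{v_t\}) - J_P(\{v_t\}, c^*) = p(c^* - c_T(\{v_t\})) + pc_0 \ge pc_0,
\end{equation*}
with equality exactly when the induced peak equals $c^*$, as holds for $\{v_t^*\}$ by construction.

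Finally, the contradiction step. If some $\{v_t'\}$ strictly improved on $\{v_t^*\}$ in \eqref{prob: peak} with $c = c^*$, then $\{v_t'\}$ would also be feasible in \eqref{prob: large} and the displayed inequality would yield $J_L(\{v_t'\}) \ge J_P(\{v_t'\}, c^*) + pc_0 > J_P(\{v_t^*\}, c^*) + pc_0 = J_L(\{v_t^*\})$, contradicting the optimality of $\{v_t^*\}$ in \eqref{prob: large}. The one subtle point, and the one to state carefully, is the direction of this demand-charge accounting: undershooting $c^*$ strictly reduces the running peak-charge term in \eqref{prob: large} but does not alter the flat fee $pc^*$ in \eqref{prob: peak}, so any putative improvement in the fixed-peak relaxation must transfer to a strict improvement in the original. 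This asymmetry, rather than any delicate optimization argument, is what closes the proof.
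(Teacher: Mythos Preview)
Your proof is correct and follows essentially the same route as the paper: verify that $\{v_t^*\}$ is feasible for \eqref{prob: peak} with $c=c^*$, then argue that no feasible competitor there can beat it without contradicting optimality in \eqref{prob: large}. You are considerably more explicit than the paper, which compresses the second step into the single assertion that \eqref{prob: peak} ``searches within a subspace'' of \eqref{prob: large}; your use of the telescoping identity to compare the two objectives and your inequality $J_L - J_P \ge pc_0$ spell out exactly what that sentence leaves implicit.
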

\begin{proof}
    Suppose $\{v_t^*\}$ is the optimal solution for problem (\ref{prob: large}). Since $c^*$ is the optimal peak demand, we have $v_t-g_t\le c^*=c$ for all $t$. This means $\{v_t\}$ is also feasible for problem (\ref{prob: peak}). Since problem (\ref{prob: peak}) searches within a subspace of problem (\ref{prob: large}), $\{v_t^*\}$ is also optimal for problem (\ref{prob: peak}).
\end{proof}
By Theorem~\ref{thm: equal}, we can compute the optimal policy for (\ref{prob: large}) by searching for the optimal $c^*$ and solving (\ref{prob: peak}).

\subsection{Solution for a Given Peak Demand}\label{sec: relax_11h}
With a fixed peak demand bound $c$, the optimization problem decomposes into independent single-step subproblems that can be solved analytically. Due to the concavity of $h_t$, we first derive the unconstrained optimum $v^\dagger_t(g_t)$, then project it onto the feasible region $\mathcal{F}_t$ defined by the constraints to obtain the constrained optimum $v^*_t(g_t)$. 
\begin{lemma}\label{lem:OptimalRelaxed}
    For $t=0,\ldots, T-1$, the optimal policy for (\ref{prob: peak}) is given by 
    \begin{align}\label{eq: myopic}
        &v_t^\ast(g_t)=\nonumber\\
        &\begin{cases}
             -\underline{e},  & \text{if}\quad  v^\dagger_t(g_t)< -\underline{e},\\
            v_t^\dagger(g_t), & \text{if}\quad -\underline{e}\le v^\dagger_t(g_t)\le \Gamma_t,\\
            \Gamma_t, &\text{if}\quad  \Gamma_t< v^\dagger_t(g_t),
        \end{cases}
    \end{align}
    where $\Gamma_t=\min\{c+g_t, \overline{e}+\bm{1}^\top\overline{\bm{d}}\}$ is the upper bound of $v_t^*(g_t)$, the unconstrained optimum $v_t^\dagger(g_t)$ is the solution without constraints (\ref{eq: hard_bound}) and (\ref{eq: peak_bound}), given by \begin{align}\label{eq: uc_v}
        &v_t^\dagger(g_t)\nonumber\\
        =&\begin{cases}
            (h')^{-1}(p_t^-), & \text{if}\quad (h')^{-1}(p_t^-)<g_t,\\
            g_t, & \text{if}\quad (h')^{-1}(p_t^+)\le g_t\le (h')^{-1}(p_t^-),\\
            (h')^{-1}(p_t^+), & \text{if}\quad g_t< (h')^{-1}(p_t^+).
        \end{cases}
    \end{align}
\end{lemma}
\begin{proof}
    With fixed peak demand, the optimization decomposes into independent subproblems at each time step $t=0,1,\ldots, T-1$:
    \begin{subequations}
        \begin{align}
            \max_{v_t}\quad & h_t(v_t) -p_t ^+[v_t-g_t]^++p_t^-[v_t-g_t]^-\\
            \text{Subject to}\quad & v_t\in[-\underline{e}, \overline{e}+\bm{1}^\top\overline{\bm{d}}], \quad v_t-g_t\le c. \label{eq: relaxed_constraint}
        \end{align}
    \end{subequations}
    The unconstrained optimum is   
    \begin{equation*}
    \begin{aligned}
        v_t^\dagger(g_t)=\arg\max_v h_t(v) - p_t^+[v-g_t]^+ + p_t^-[v-g_t]^-\\
        =
        \begin{cases}
            (h_t')^{-1}(p_t^-), & \text{if} \quad (h_t')^{-1}(p_t^-)<g_t,\\
            g_t, & \text{if} \quad (h_t')^{-1}(p_t^+)\le g_t\le (h_t')^{-1}(p_t^-),\\
            (h_t')^{-1}(p_t^+), & \text{if} \quad g_t< (h_t')^{-1}(p_t^+).
        \end{cases}
    \end{aligned}
    \end{equation*}
    Projecting onto the feasible region defined by constraint (\ref{eq: relaxed_constraint}) yields the optimal policy in (\ref{eq: myopic}).
\end{proof} 
The optimal policy has an intuitive economic interpretation based on the relationship between renewable generation and consumption needs. The unconstrained optimal solution $v_t^\dagger(g_t)$ balances the marginal utility of consumption against energy costs. When $(h_t')^{-1}(p_t^-) < g_t$, the renewable generation $g_t$ exceeds the energy demanded by consuming at the export price $p^-$ and charging the battery. In this case, the optimal consumption $v_t^\dagger(g_t) = (h_t')^{-1}(p_t^-)$ is set such that the marginal utility equals the export price, meaning excess generation beyond this consumption level should be sold to the grid. When $(h_t')^{-1}(p_t^+) \le g_t \le (h_t')^{-1}(p_t^-)$, the renewable generation exactly matches the optimal consumption level, so $v_t^\dagger(g_t) = g_t$ and no energy transactions occur. When $g_t < (h_t')^{-1}(p_t^+)$, renewable generation is insufficient relative to the consumption level that would justify purchasing energy at price $p_t^+$. Here, the optimal consumption $v_t^\dagger(g_t) = (h_t')^{-1}(p_t^+)$ sets marginal utility equal to the purchase price, requiring additional energy to be bought from the grid. The final policy $v_t^*(g_t)$ then projects this unconstrained optimum onto the feasible region, ensuring that consumption respects both battery operation limits and peak demand constraints.

\subsection{Algorithmic Search for the Optimal Peak Demand}
We leverage the concavity properties of the objective function to efficiently determine the optimal peak demand $c^*$ without solving the complete optimization problem (\ref{prob: large}). Our approach exploits the structural properties of the relaxed co-optimization problem (\ref{prob: peak}).
\begin{lemma}
    \label{lem:ConcavitySpecialCase}
    Let $J(c)$ be the optimal objective function of (\ref{prob: peak}),
    \begin{equation}
        J(c) := -pc+\sum_{t=0}^{T-1} h_t(v_t^*)-p_t^+[v_t^*-g_t]^+ + p_t^-[v_t^*-g_t]^-,
    \end{equation} where $v_t^*$ depends on $c$ through the optimization. Then $J(c)$ is concave in $c$.
\end{lemma}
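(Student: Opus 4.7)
The plan is to exploit the parametric structure of problem (\ref{prob: peak}): the parameter $c$ enters the objective only through the linear term $-pc$ and enters the constraints only through the linear inequalities $v_t \le g_t + c$. Under this structure, the optimal value of a concave maximization problem is itself concave in $c$, a standard consequence of jointly concave parametric programming.

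Concretely, I would fix any $c_1, c_2$ and $\lambda \in [0,1]$, set $c_\lambda = \lambda c_1 + (1-\lambda)c_2$, and let $\{v_t^{(1)}\}, \{v_t^{(2)}\}$ be optimal action sequences for (\ref{prob: peak}) at $c = c_1$ and $c = c_2$, respectively. Define the convex combination $v_t^{(\lambda)} := \lambda v_t^{(1)} + (1-\lambda) v_t^{(2)}$. The first step is feasibility: since the box constraint (\ref{eq: hard_bound}) is independent of $c$ and defines a convex set in $v_t$, $v_t^{(\lambda)}$ satisfies it; moreover, $v_t^{(\lambda)} \le \lambda(g_t + c_1) + (1-\lambda)(g_t + c_2) = g_t + c_\lambda$, so constraint (\ref{eq: peak_bound}) holds at the parameter value $c_\lambda$.

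The second step is to bound the objective using concavity of each stagewise term in $v_t$. Recall that $h_t$ is concave (as noted after (\ref{eq: helper})) and that $p_t^+[v_t-g_t]^+ - p_t^-[v_t-g_t]^-$ is convex in $v_t$ whenever $p_t^+ \ge p_t^- \ge 0$ (the same argument used in \autoref{lem: concave}). Applying these inequalities term by term to $\{v_t^{(\lambda)}\}$, and using linearity of $-pc$ in $c$, gives
\begin{equation*}
\sum_{t=0}^{T-1} \Big( h_t(v_t^{(\lambda)}) - p_t^+[v_t^{(\lambda)} - g_t]^+ + p_t^-[v_t^{(\lambda)} - g_t]^- \Big) - p c_\lambda \;\ge\; \lambda J(c_1) + (1-\lambda) J(c_2).
\end{equation*}
Because $\{v_t^{(\lambda)}\}$ is feasible for (\ref{prob: peak}) at $c_\lambda$ but need not be optimal, the left-hand side is at most $J(c_\lambda)$, yielding $J(c_\lambda) \ge \lambda J(c_1) + (1-\lambda) J(c_2)$, which is the desired concavity.

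I do not anticipate a major obstacle: the only substantive ingredient is recognizing that the parameter $c$ enters both the objective and the constraints linearly, so the standard perturbational argument for concave programs applies. The most delicate check is that $v_t^{(\lambda)}$ remains feasible under the $c$-dependent constraint $v_t \le g_t + c$, which succeeds precisely because the constraint is affine in $(v_t, c)$. Everything else reduces to reusing the concavity properties of $h_t$ and of the piecewise-linear tariff payment that were already established earlier in the paper.
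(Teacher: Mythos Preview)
Your proof is correct and uses a genuinely different route from the paper's. You argue via the standard parametric-programming device: take optimal solutions at $c_1$ and $c_2$, form their convex combination, verify feasibility at $c_\lambda$ (which works because both constraints are affine in $(v_t,c)$), and then invoke concavity of $h_t$ together with convexity of $p_t^+[\,\cdot\,]^+-p_t^-[\,\cdot\,]^-$ to bound the objective. The paper instead identifies the active set $I(c)=\{t:\,v_t^*=c+g_t\}$, notes that $v_t^*$ is constant in $c$ for $t\notin I(c)$ and equals $c+g_t$ for $t\in I(c)$, and then differentiates directly to obtain $J'(c)=-p+\sum_{t\in I(c)}(h_t'(c+g_t)-p_t^+)$ and $J''(c)=\sum_{t\in I(c)}h_t''(c+g_t)\le 0$. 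Your argument is cleaner and requires no differentiability beyond what is already assumed; the paper's argument is slightly more informal (the active set $I(c)$ changes with $c$, so one is really computing piecewise derivatives), but it has the advantage of producing an explicit formula for $J'(c)$, which the paper immediately reuses in~(\ref{eq: opt}) and in Algorithm~\ref{alg: myopic} to locate $c^*$. If you adopt your proof in place of the paper's, you would still need to derive $J'(c)$ separately for the algorithmic section.
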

\begin{proof}
    Let $I(c) = \{t: v_t^* = c+g_t\}$ be the set of time indices where the peak demand constraint is active. We can decompose the objective function as
    \begin{equation*}
    \begin{aligned}
        J(c)=&-pc + \sum_{t\in I(c)} h_t(c+g_t) - p_t^+ c \\&+\sum_{t\in [T-1]\setminus I(c)}h_t(v_t^*)-p_t^+[v_t^*-g_t]^++p_t^-[v_t^*-g_t]^-.
    \end{aligned}
    \end{equation*}
    As $c$ increases, the constraint set becomes less restrictive, so $|I(c)|$ is non-increasing. For the terms not in $I(c)$, the optimal $v_t^*$ is independent of $c$ and determined by the unconstrained optimum. 
    Taking the first and second derivatives,
    \begin{align*}
        &J'(c) = -p+\sum_{t\in I(c)} h'_t(c+g_t) - p_t^+,\\
        &J''(c) = \sum_{t\in I(c)}h''_t(c+g_t)\le 0,
    \end{align*}
    where the inequality follows from the concavity of $h_t$. Therefore, $J(c)$ is concave in $c$.
\end{proof}
The concavity of $J(c)$ ensures that the optimal $c^*$ is unique and can be found by solving the first-order condition $J'(c^*) = 0$. The key insight is that $I(c)$ changes only at discrete values of $c$, allowing us to search for the optimum efficiently. Specifically, define the candidate set
\begin{align*}
    \mathcal{C} = \{\min\{v_t^\dagger(g_t) - g_t, \bm{1}^\top\overline{\bm{d}}+\overline{e}-g_t\}\}_{t=0}^{T-1},
\end{align*}
where $v_t^\dagger(g_t)$ is the unconstrained optimum from (\ref{eq: uc_v}). The set $I(c)$ remains constant for $c\in[c_1,c_2]$, where
\begin{equation}\label{eq: interval}
\begin{aligned}
    &c_1=\max\{c: c\in \mathcal{C}, J'(c)< 0\},\\
    &c_2=\min\{c: c\in \mathcal{C}, J'(c)> 0\}.
\end{aligned}
\end{equation}
The optimal $c^*$ lies in this interval and satisfies
\begin{align}\label{eq: opt}
    J'(c^*)=-p+\sum_{t\in I(c_1)} h'_t(c^*+g_t)-p_t^+ = 0.
\end{align}
\subsection{Large Storage Peak Searching (LSPS) Algorithm}


The LSPS algorithm (Algorithm~\ref{alg: myopic}) is designed to solve the scheduling problem offline. For a given scheduling horizon ({\em e.g.,} 24 hours), the algorithm uses forecasts of exogenous variables, like renewable generation $g_t$, for the entire period to solve for a single optimal peak demand, $c^*$. This single $c^*$ is then used to determine the optimal control actions for each time step within the horizon. The algorithm efficiently approximates the problem (\ref{eq: prob}) by combining the relaxation techniques from the previous sections. It leverages the key structural properties established in the previous sections: the decomposition enabled by the large storage capacity relaxation, the analytical solution for fixed peak demand, and the concavity of the objective function in the peak demand parameter. By searching over the discrete candidate set $\mathcal{C}$, we avoid solving the full dynamic optimization problem and rather solve a linear complexity algorithm, as formalized below, while maintaining a small solution gap compared to the optimal. 
\begin{algorithm}[t]
\caption{Large storage peak searching (LSPS) algorithm}\label{alg: myopic}
\begin{algorithmic}[1]
\State Candidate set $\mathcal{C} \leftarrow \{\min\{v_t^\dagger(g_t) - g_t, \overline{e}+\bm{1}^\top\overline{\bm{d}}-g_t\}\}_{t=0}^{T-1}$
\State Derivatives $J'(\mathcal{C}) \leftarrow \{J'(c): c \in \mathcal{C}\}$
\State $c_1\leftarrow\max\{c: c\in \mathcal{C}, J'(c)< 0\}$
\State Solve $J'(c^*) = 0$ using (\ref{eq: opt}) for optimal peak demand $c^*$
\State Find optimal actions $v_t^*$ using (\ref{eq: myopic}) with peak demand $c^*$
\State Project $v_t^*$ back to variables $(e_t^*, \bm{d}_t^*)$ using (\ref{eq: helper})
\State Clip $e_t^*$ to ensure $s_{t+1}\in[0, B]$
\end{algorithmic}
\end{algorithm}

\begin{theorem}
Algorithm~\ref{alg: myopic} has computational complexity $\mathcal{O}(T)$.
\end{theorem}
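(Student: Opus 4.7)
The plan is to analyze \autoref{alg: myopic} line by line and show that each step runs in $O(T)$ time, so that the total runtime is $\mathcal{O}(T)$. Line 1 constructs the candidate set $\mathcal{C}$ by evaluating, for each $t = 0, \ldots, T-1$, the closed-form unconstrained optimum $v_t^\dagger(g_t)$ from (\ref{eq: uc_v}) and taking a minimum with $\bm{1}^\top\overline{\bm{d}} + \overline{e} - g_t$; both operations are $O(1)$ per index, so this pass costs $O(T)$. Lines 5, 6, and 7 similarly apply the formula (\ref{eq: myopic}), the projection through (\ref{eq: helper}), and a clip to the SoC-feasible range independently at each time step, each $O(1)$ per $t$ and thus $O(T)$ overall.

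The technical core is Lines 2--4. The key observation I would exploit is that $t \in I(c)$ if and only if the $t$-th candidate is at least $c$, so $I(c)$ depends on $c$ only through a single threshold and changes by exactly one index as $c$ crosses a candidate value. Combined with the concavity of $J$ established in \autoref{lem:ConcavitySpecialCase} (so $J'$ is non-increasing), the set $\{c \in \mathcal{C} : J'(c) < 0\}$ forms an upper tail of $\mathcal{C}$. To locate $c_1$ without fully sorting $\mathcal{C}$, I would employ a linear-time selection procedure in the spirit of quickselect: pick a pivot from the surviving portion of $\mathcal{C}$, partition around it, evaluate $J'$ at the pivot in time proportional to the current sub-array, discard the side that cannot contain $c_1$ according to the sign of $J'$, and recurse on the other side. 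The work at each recursion level is linear in the surviving sub-array, which shrinks by a constant factor per level, so the total telescopes to $O(T)$. Line 4 then solves $J'(c^*) = 0$ on the interval where $I(c)$ is constant; equation (\ref{eq: opt}) is a one-dimensional concave root-finding problem with at most $|I(c_1)| \le T$ summands and is solvable in $O(T)$ either analytically, when $(h_t')^{-1}$ admits a closed form, or by a fixed-tolerance bisection on the bracket $[c_1, c_2]$.

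The main obstacle is justifying that Lines 2--3 run in $O(T)$ rather than the naive $O(T^2)$ arising from recomputing $J'$ at every candidate, or the $O(T \log T)$ of a comparison-based sort of $\mathcal{C}$. The linear-time bound relies critically on the structural result in \autoref{lem:ConcavitySpecialCase}: monotonicity of $J'$ reduces the task to a one-dimensional selection problem, which is solvable in linear time via the selection routine above rather than by sorting the full candidate set. Once this is established, summing the per-line $O(T)$ bounds delivers the claimed $\mathcal{O}(T)$ overall complexity.
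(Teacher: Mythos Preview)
Your line-by-line decomposition and the treatment of Lines 1 and 4--7 match the paper's proof. The paper handles Line 2 in one clause: under the standing assumption that $h_t'$ and $(h_t')^{-1}$ are constant-time, it simply asserts that ``Step 2 evaluates $T$ derivatives'' and counts this as $O(T)$, followed by a linear scan in Step 3. It does not engage with the fact that each $J'(c)=-p+\sum_{t\in I(c)}(h_t'(c+g_t)-p_t^+)$ is itself a sum over $|I(c)|\le T$ terms; you correctly flag this as the real bottleneck and go beyond the paper by proposing a selection-based workaround.

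That workaround, however, has a gap. The claim that you can ``evaluate $J'$ at the pivot in time proportional to the current sub-array'' fails in the branch where the recursion descends to the \emph{lower} partition. The set $I(c_p)$ contains every index whose candidate value is at least $c_p$, including all indices discarded into the upper side at earlier levels, and their contributions $h_t'(c_p+g_t)$ depend on the current pivot and cannot be pre-aggregated. Concretely, if the recursion goes lower at every level, then at level $\ell$ the pivot sits at rank roughly $T-T/2^{\ell+1}$ from the top, so $|I(c_p)|\approx T-T/2^{\ell+1}$; summing over $O(\log T)$ levels gives $\Theta(T\log T)$, not $O(T)$. The telescoping you invoke is valid only when every recursion goes to the upper half, which is not guaranteed. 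So your argument is more scrupulous than the paper's one-line treatment of Step 2, but it does not actually establish the claimed $O(T)$ bound for Lines 2--3 either.
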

\begin{proof}
Under the assumption that $h_t'$ and $(h_t')^{-1}$ can be evaluated in constant time, each step of the algorithm requires $\mathcal{O}(T)$ operations. Step 1 computes $T$ candidate values, Step 2 evaluates $T$ derivatives, Step 3 performs a linear search through $\mathcal{O}(T)$ values, Step 4 solves a single equation with less than $T$ terms in $\mathcal{O}(T)$ time, and Steps 5-7 compute $T$ optimal actions with each one constant time. Therefore, the overall complexity is $\mathcal{O}(T)$.
\end{proof}

Empirical analysis on the performance of the LSPS algorithm is presented in \sectsign\ref{sec:case_study}. A comparison between the computational cost of the LSPS algorithm and a deterministic optimization that solves (\ref{eq: prob}) offline in one-shot is presented in Appendix \ref{app:ComputationalCost}.

\section{Case Study}
\label{sec:case_study}
Using two rich datasets, and considering a one-day operation horizon with hourly decisions, $T = 24$ hours, we evaluate the DER co-optimization performance of the proposed LSPS algorithm and several benchmarks compared to a theoretical maximum that assumes perfect knowledge of future renewables solving (\ref{eq: prob}). Performance is measured by the solution gap, which is the difference between the cumulative reward of the algorithms and the theoretical upper bound. The experiments were run on a server with 1 AMD Ryzen Threadripper 3990X 64-Core Processor and 4 Nvidia RTX A4000 GPUs.


\subsection{Dataset Description}

\begin{figure}[th] 
    \centering
    \subfigure[Daily net demand data for HouseZero from Jun 2022 to May 2024]{
    \includegraphics[width=0.46\linewidth]{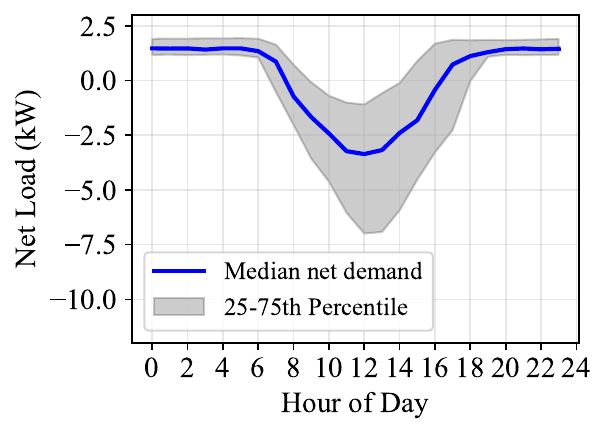}
    }
    \subfigure[Daily net demand data for BDGP building from Jan 2016 to Dec 2017]{
    \includegraphics[width=0.46\linewidth]{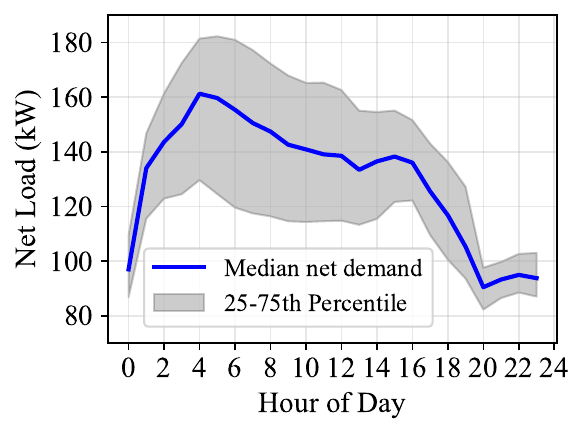}
    }
    \caption{Daily net demand distributions for HouseZero and BDGP over the training and testing period.}
    \label{fig:net_demand}
\end{figure}



Two open-source datasets from Harvard HouseZero~\cite{han_two-year_2024} and Building Data Genome Project (BDGP)~\cite{miller2020buildingdata} were used to evaluate the performance of DER scheduling under different algorithms. The two buildings were chosen with distinct characteristics, while both are equipped with solar PV. Renewable generation and demand data, resampled to hourly granularity, were used to train and test the performance of different control strategies. Fig. \ref{fig:net_demand} shows the net demand distribution for the HouseZero and BDGP buildings. The two buildings differ significantly in their energy pattern: HouseZero is a low-energy, ultra-efficient building with a peak demand below 2 kW and a frequently negative net consumption. In contrast, the BDGP building has substantially higher consumption, with net consumption peaking at over 160 kW. 



\subsection{Simulation Setup}
We adopt the concave utility function from \cite{samadi2012advanced}: $U_t(d_t)=\alpha_t d_t - \frac{1}{2}\beta_td_t^2$, where $\alpha_t$ and $\beta_t$are time-varying and learned a priori using historical energy prices, demands, and demand elasticity \cite{alahmed_integrating_2022}. The price elasticity of consumption was set to $-0.1$ \cite{asadinejad2018evaluation}. To get more crisp insights, we assumed time-invariant NEM tariff parameters, with a retail (import) rate of $p^+ = \$0.12$/kWh and a sell (export) rate of $p^- = \$0.06$/kWh. The daily peak demand charge was set to $p = \$10$/kW. The final salvage value for the remaining energy in the battery was set at $\gamma = \$0.09$/kWh. The baseline battery parameters were selected based on the demand data and the size of the building. For HouseZero, $B=5$ kWh and $\underline{e}=\overline{e}=1$ kW were used; while for BDGP building,  $B=350$ kWh and $\underline{e}=\overline{e}=50$ kW were used. In both cases, the battery charging/discharging efficiencies were set at $\tau=\rho=0.95$.

\subsection{Control Strategies}
We consider and compare five distinct control strategies for scheduling DER. The description of each strategy is provided below, while a detailed comparison of their respective advantages and limitations is presented in Appendix \ref{app:AlgoComparison}.
\paragraph{Backup Mode}
The backup mode prioritizes using renewables to charge the battery to maintain a high SoC level, which is usually set by the user. The battery discharges only during grid outages, equivalent to the case without a battery system \cite{jin_optimal_2017} during normal operations. Consequently, this control strategy cannot exploit the full potential of jointly optimizing demand and storage.


\paragraph{Renewable-Adjusted Threshold Policy (RATP)}
RATP is a threshold-based battery scheduling policy that adjusts charging and discharging based on the renewable adjusted gross consumption ($\tilde{d} = d - g$). When $\tilde{d}> 0$, the battery discharges to reduce grid purchases; when $\tilde{d}< 0$, it charges to avoid exporting excess renewables at the low $p^-$. When $\tilde{d}= 0$ the battery stays idle. Although simple, RATP is highly effective under NEM structures, as it arbitrages the difference between the retail and sell rates by minimizing power imports and exports. However,  RATP's performance significantly degrades under NEM with demand charges, as it is agnostic to strategic net consumption decisions that shape demand peaks.

Mathematically, the storage control under RATP, for a given demand $d_t= \hat{d}_t$, is
\begin{equation*}
    e_t(g_t)=\begin{cases}
        \max\{-\underline{e}, -\rho s_t, -\tilde{d}_t\}, & \tilde{d}_t(g)>0\\
        \min\{\overline{e}, (B-s_t)/\tau, -\tilde{d}_t\} & \tilde{d}_t(g)\le 0,
    \end{cases}
\end{equation*}
for $t=0,1,\ldots,T-1$.

\paragraph{RL Algorithm}
We use a popular model-free RL method of proximal policy optimization (PPO) algorithms to directly learn the optimal policy for controlling the DER system described. PPO uses an actor-network and a critic network with a multilayer perceptron (MLP) structure. For training, each of the actor and critic networks consists of two hidden layers with 256 neurons per layer. The actor-network predicts the optimal action, while the critic network estimates the optimal value function. We train the networks with a learning rate of $1e^{-4}$, batch size of 64, and 16 epochs per update. Other training parameters are kept as the default provided by Stable-baselines3 \cite{stable-baselines3}. Training histories for RL are provided in Appendix \ref{appendix:train_history}, showing the convergence of the trained policy that is used in \sectsign\ref{subsec:sim}.


\paragraph{Large Storage Peak Searching (LSPS) Algorithm} 
LSPS algorithm, described in Algorithm~\ref{alg: myopic}, offers an efficient and scalable method to co-optimize DER. Following Algorithm~\ref{alg: myopic}, we first calculate the optimal peak, then, by using Lemma \ref{lem:OptimalRelaxed} with the optimal peak, we compute the co-optimized battery and demand actions at each time step.

\paragraph{Theoretical Upper Bound}
The theoretical upper bound assumes perfect knowledge of the renewables for the entire scheduling period and deterministically solves the constrained convex program in (\ref{eq: prob}) in one-shot while relaxing the simultaneous charging-discharging constraint, giving it a slight advantage over LSPS.

During the prediction stage, RL takes in the current state values, including battery SoC,  renewable generation, peak demand, and outputs the optimal actions based on the learned policy. While the LSPS algorithm during prediction takes in the optimal peak demand in addition to the current demand and generation. The optimal actions are generated by solving an optimization problem. During the training stage, a similar level of computation is required as RL learns the system dynamics, including the patterns of the renewable profile with two years of hour-level training data, while LSPS requires models for renewable predictions.


\subsection{Simulation Results}\label{subsec:sim}

We compare the performance of the control strategies (a)-(d) above to the theoretical upper bound in (e) by varying the following parameters:

\begin{itemize}
    \item \textbf{Battery capacity ($B$)}: 5 kWh -- 50 kWh for HouseZero and 350 kWh -- 950 kWh for BDGP building to simulate battery sizes from small to large.
    \item \textbf{Salvage value rate ($\gamma$)}: \$0.03$/\text{kWh}$ -- \$15$/\text{kWh}$ for cases including $\gamma < p_t^{-}$, $p_t^{-} < \gamma < p_t^{+}$, and $\gamma > p_t^{+}$.
    \item \textbf{Export rate of electricity ($p^{-}$)}: $\$0/\text{kWh}$ 
    -- $\$0.12/\text{kWh}$ from no to high sell back incentive with $p^{-}=p^{+}$.
    \item \textbf{Peak demand price of electricity ($p$)}: $\$0/\text{kW}$ -- $\$10/\text{kW}$ from no penalty on a high peak to a reasonably high peak charge commonly practiced in electricity tariffs.
\end{itemize}

Table~\ref{table:result_summary} summarizes the overall performance comparisons in terms of the solution gap to the optimal solutions.\footnote{Table~\ref{table:result_all} in Appendix \ref{appendix:all_result} provides detailed results for all cases.} Overall, the LSPS algorithm performed closest to the optimal solutions over all scenarios with an average gap of 12.86\%, followed by RL of 20.21\%, RATP algorithm of 41.93\%, and backup mode of 43.75\%. The LSPS algorithm was closest to the optimal 34 out of 42 scenarios (81\%) while RL was closest to optimal in the remaining 8 scenarios (19\%). Furthermore, for almost all cases, control algorithms showed better performance for the smaller building of HouseZero, with LSPS performing only 4.52\% below the theoretical optimal. 

The performance gap between the HouseZero and BDGP buildings is mainly attributed to differences in their peak demand, total energy consumption, and net load profiles. Specifically, the shape of the net load depends on whether the building transitions between net-producing and net-consuming regimes, the timing of these transitions, and the duration of each regime. In the case of HouseZero, lower demand and higher on-site generation result in negative net load during daytime hours, with peak demand occurring in the early morning or late evening. In contrast, the substantially higher demand of the BDGP building prevents it from reaching a net-producing state, leading to peak demand periods during daytime hours.
        
\begin{table}[t]
\centering
\setlength{\tabcolsep}{6pt}
\caption{Summary of solution gap (\%) of different control strategies}
\label{table:result_summary}
\resizebox{\columnwidth}{!}{
\begin{tabular}{cccccc}
\toprule
\midrule
Case & Building & \multicolumn{4}{c}{Percentage gaps to optimal} \\
\cmidrule(r){1-2}  \cmidrule(r){3-6}
&  & Backup & RATP & RL & LSPS \\
\midrule
\multirow{2}{*}{Battery capacity} 
                & HouseZero & 47.44\% & 14.17\% & 9.33\% & 1.03\%\\
                & BDGP & 64.42\% & 63.56\% & 29.50\% & 25.37\% \\
\midrule
\multirow{2}{*}{Salvage value rate} 
                & HouseZero & 41.06\% & 52.41\% & 19.54\% & 6.06\% \\
                & BDGP & 50.16\% & 66.56\% & 29.28\% & 18.64\%\\
\midrule
\multirow{2}{*}{Export rate} 
                & HouseZero & 50.83\% & 46.21\% & 22.58\% & 8.59\% \\
                & BDGP & 62.23\% & 61.26\% & 32.76\% & 31.91\% \\
\midrule
\multirow{2}{*}{Peak demand price} 
                & HouseZero & 17.86\% & 15.85\% & 8.01\% & 2.41\% \\
                & BDGP & 16.04\% & 15.39\% & 10.70\% & 8.86\% \\

\midrule
\multirow{2}{*}{\textbf{Overall}} 
                & HouseZero & 39.30\% & 32.16\% & 14.86\% & 4.52\% \\
                & BDGP & 48.21\% & 51.69\% & 25.56\% & 21.20\% \\


\midrule
\bottomrule
\end{tabular}
}
\end{table}

\begin{figure*}[th]
    \centering
    
    \subfigure[HouseZero, battery capacity case]{
        \label{subfig:hz_battery_case}
        \includegraphics[width=0.23\linewidth]{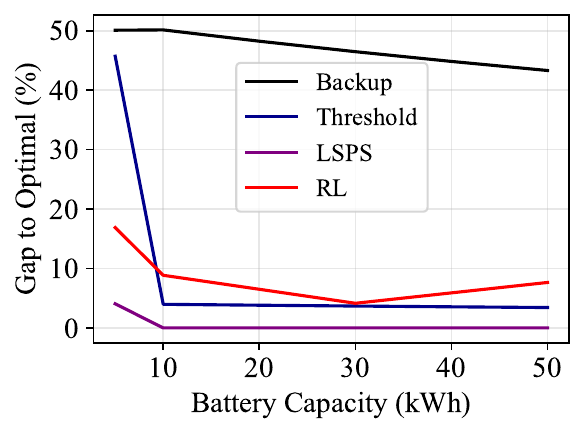}
    }
    \subfigure[HouseZero, salvage value rate case]{
        \label{subfig:hz_salvage_case}
        \includegraphics[width=0.23\linewidth]{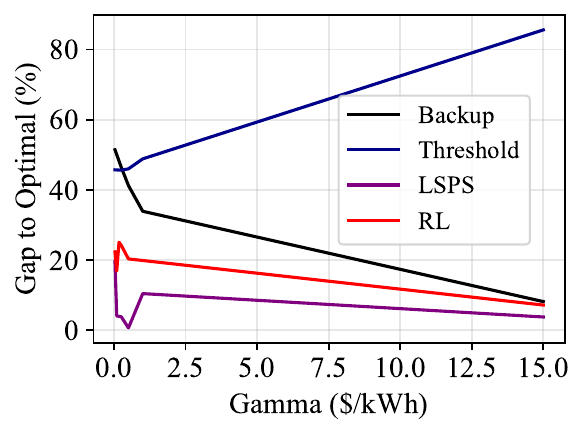}
    }
    \subfigure[HouseZero, export rate case]{
        \label{subfig:hz_export_case}
        \includegraphics[width=0.23\linewidth]{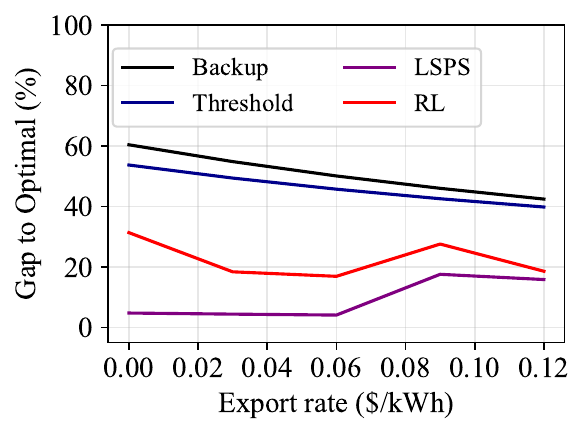}
    }
    \subfigure[HouseZero, peak charge case]{
        \label{subfig:hz_peak_case}
        \includegraphics[width=0.23\linewidth]{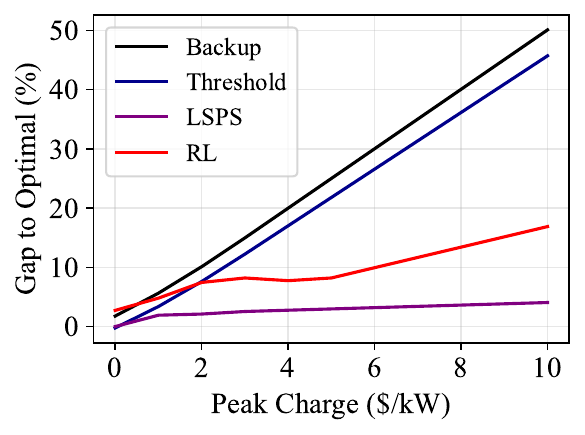}
    }
    
    \subfigure[BDGP, battery capacity case]{
        \label{subfig:bdgp_battery_case}
        \includegraphics[width=0.23\linewidth]{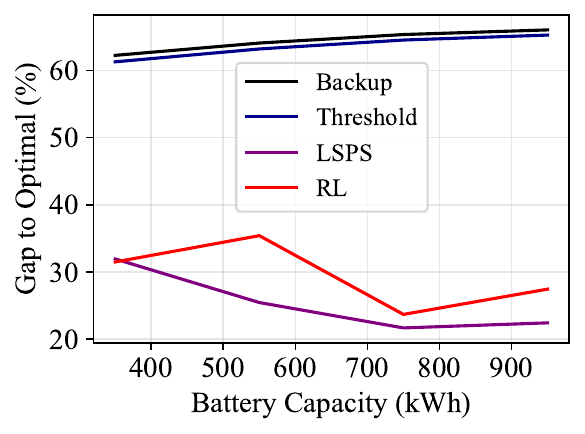}
    }
    \subfigure[BDGP, salvage value rate case]{
        \label{subfig:bdgp_salvage_case}
        \includegraphics[width=0.23\linewidth]{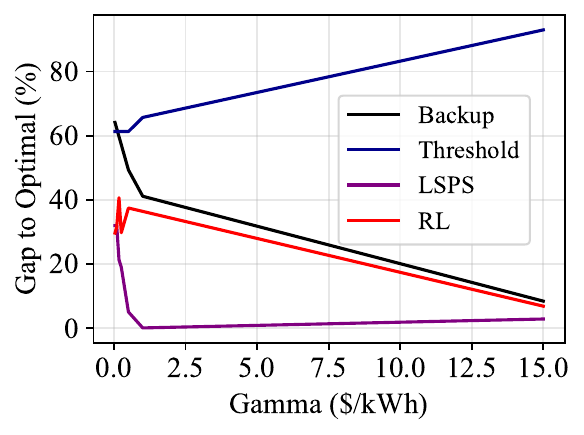}
    }
    \subfigure[BDGP, export rate case]{
        \label{subfig:bdgp_export_case}
        \includegraphics[width=0.23\linewidth]{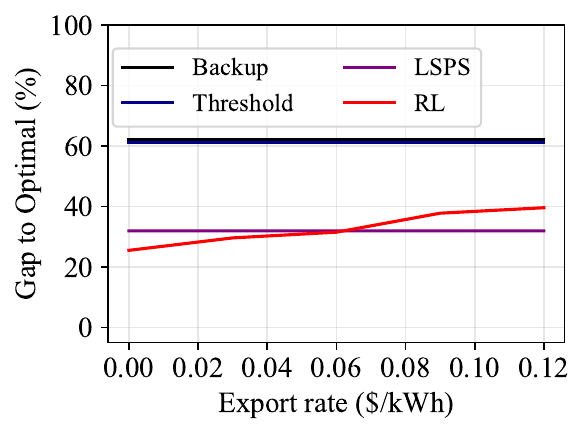}
    }
    \subfigure[BDGP, peak charge case]{
        \label{subfig:bdgp_peak_case}
        \includegraphics[width=0.23\linewidth]{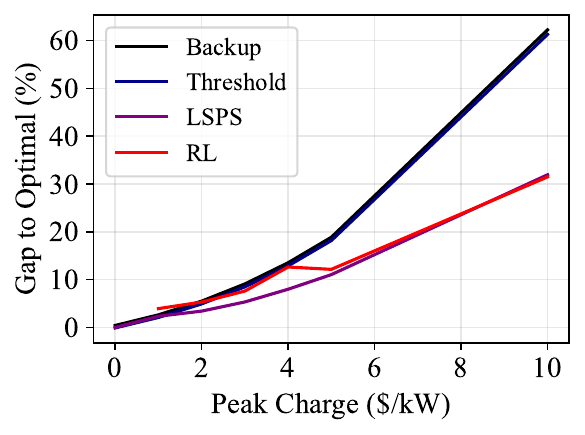}
    }
    
    \caption{Comparison of daily surplus gaps to theoretical upper bound for different control strategies with varying parameters using HouseZero (top row) and BDGP (bottom row) test data. The cases include (a,e) battery capacities, (b,f) salvage value rates, (c,g) electricity export rates, and (d,h) peak demand prices.}
    \label{fig:combined_all_cases}
\end{figure*}


Illustrated in Fig.~\ref{fig:combined_all_cases}, for the first group of test cases varying battery capacity, the LSPS algorithm performed closest to the theoretical upper bound for both HouseZero and BDGP in Fig.~\ref {subfig:hz_battery_case} and Fig.~\ref {subfig:bdgp_battery_case}. For the HouseZero case with low loads and large battery capacities, the peak was completely removed (0 kW), leading the LSPS algorithm to perform as well as the theoretical optimal. Large battery capacity also significantly improved the RATP algorithm performance, even though it significantly fell behind the RL and LSPS algorithms when the battery size was more limited. On the other hand, RL followed a similar trend as the LSPS algorithm while performing around 10\% worse over all cases. RL was still stable and more adaptive to changes in the environment than the RATP and backup algorithms. For the much larger BDGP building, both RL and LSPS algorithms improved overall with the increased battery capacity, while backup and RATP failed to show the expected improvement due to their inflexibility in considering the high peak and energy use in this building. 

Second, varying salvage value rate again showed the superior performance of the LSPS algorithm for both HouseZero and BDGP in Fig. \ref{subfig:hz_salvage_case} and Fig. \ref{subfig:bdgp_salvage_case}. The RATP algorithm tended to use up the available battery charge, leading to low salvage value at the end of the day and causing a larger gap to optimal with an increased salvage value rate. Meanwhile, for extremely high salvage value cases, the backup case got closer to optimal as the strategy kept the battery full at the end of the day. RL performed relatively well and was closer to the LSPS algorithm and optimal for the relatively smaller HouseZero building. Overall, for both HouseZero and BDGP, the performances exhibited a similar trend across the four control algorithms.
 
Third, varying export rates highlighted the difference between HouseZero and BDGP buildings, illustrated in Fig. \ref{subfig:hz_export_case} and Fig. \ref{subfig:bdgp_export_case}. For the BDGP building, the higher energy demand meant the building would never be in a net producing regime, leading to flat lines for the control algorithms. With RL, there were slight variations in performance due to the stochasticity during training. For HouseZero, all control algorithms improved with a higher export rate since it generated more revenue for the building to sell back the excess energy production from solar. LSPS and RL algorithms performed relatively well and were stable across different export rates over the two test buildings.

Lastly, varying peak demand charges saw all control algorithms change in a similar trend shown in Fig. \ref{subfig:hz_peak_case} and Fig. \ref{subfig:bdgp_peak_case}. It was observed that for low peak charges, all control algorithms performed well, highlighting the complexity of peak demand charges added to optimally managing such a DER system. With higher peak demand charges, all control algorithms diverged from optimal, with LSPS and RL algorithms performing the best across different scenarios. Additionally, both of them showed a smaller solution gap for the smaller building of HouseZero, indicating the challenges for buildings with higher demand and energy consumption.

In addition to analyzing the sensitivity of the above-mentioned parameters, we compared the performance of the RL and LSPS algorithms under different load and renewable patterns in Tab. \ref{tab:diff_load_cases}. Results show that higher generation and lower demand provided higher surplus and a smaller gap to optimal from both RL and LSPS. On average, RL performed with a gap of 17.93\% while LSPS performed with a gap of 4.43\%. The results demonstrated the effectiveness of the RL and LSPS across various load patterns, highlighting the potential for implementing the algorithm in future long-term real-world tests.


\begin{table}[h]
    \centering
    \caption{RL and LSPS performance under different generation and demand scenarios.}
    \label{tab:diff_load_cases}
    \resizebox{\columnwidth}{!}{
    \begin{tabular}{ccccccc}
        \toprule
        \midrule
        & & \multicolumn{3}{c}{Surplus} & \multicolumn{2}{c}{Gap to Optimal} \\
        \cmidrule(r){1-2}  \cmidrule(r){3-5} \cmidrule(r){6-7}
        Generation & Demand & RL & LSPS & Optimal & RL & LSPS \\
        \midrule
        75th Percentile & 25th Percentile & 16.4 & 18.9 & 20.3 & 19.2\% & 6.9\% \\
        50th Percentile & 50th Percentile & 18.2 & 21.0 & 21.9 & 16.9\% & 4.1\% \\
        25th Percentile & 75th Percentile  & 18.1 & 21.5 & 22.0 & 17.7\% & 2.3\% \\
        \midrule
        \bottomrule
    \end{tabular}
    }
\end{table}

Overall, both LSPS and RL showed great robustness and performance across different scenarios with varying battery capacities, salvage values, and prices of electricity. Specifically, LSPS optimization outperformed backup and RATP algorithms by 68.47\% and 68.9\%, while RL outperformed them by 52.49\% and 48.01\%, respectively. LSPS and RL algorithms trended in sync in their gap to optimality, demonstrating their ability to capture system dynamics, while the LSPS algorithm is superior in around 81\% instances. Moreover, LSPS was optimal with a large battery capacity and close to optimal for a low export rate and peak demand charge. With 0 or low peak demand charges, all control algorithms performed close to optimal. In comparison, the RATP and backup algorithms were sensitive to changes in battery capacity, salvage rate, peak demand charge, and export rate when the building entered the net-producing regime. For cases where the battery capacity was large or the peak price was low, RATP outperformed RL and approached optimal with deterministic predictions. However, with a high salvage value and demand price, the RATP algorithm's performance degraded significantly and became even worse than the backup mode. Backup mode varied mostly linearly in its gap to the optimal case. It performed well only in special cases with high salvage value rates and low peak demand cases.


\begin{figure}[th] 
    \centering
    \subfigure[Battery action comparison]{
    \includegraphics[width=0.46\linewidth]{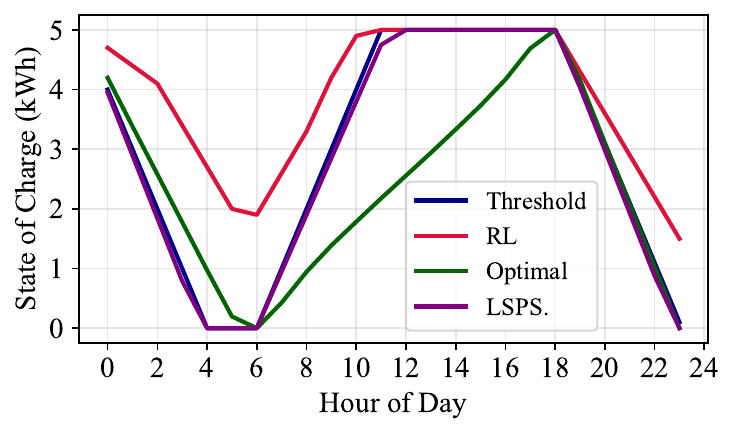}
    }
    \subfigure[Demand action comparison]{
    \includegraphics[width=0.46\linewidth]{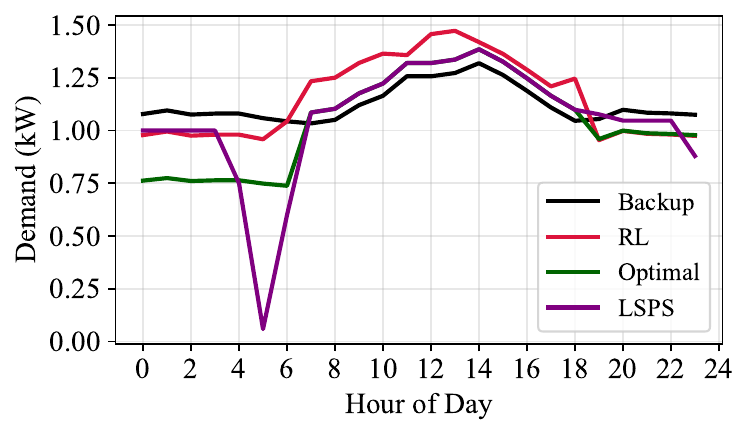}
    }
    \caption{Typical day battery and demand actions comparison between different control methods for test building HouseZero.}
    \label{fig:typical_behavior}
\end{figure}

As an example, Fig.~\ref{fig:typical_behavior} shows the action sequence for the baseline configuration of HouseZero over one test day. All strategies followed a similar behavior, which discharged the battery at the beginning and end of the day. At the same time, demand was reduced in the evening when there was no solar generation. Both RATP and RL were more aggressive in battery charging actions when lacking generation predictions. The optimal battery actions indicated a much slower charging behavior over the daytime with precise knowledge of solar generation, creating the performance gap mentioned above. It was also observed that the LSPS algorithm may generate aggressive demand reduction actions due to the lack of predictions. Such behavior negatively impacted the utility function and the overall objective, causing larger solution gaps.

\section{Conclusion}
\label{sec:conclusion}

This work provides the first analytical characterization of the joint optimization of flexible demand and BESS under demand charges and bi-directional power flow. We model the problem using a stochastic dynamic program and show that the optimal policy exhibits a threshold structure. To address computational complexity, we develop an efficient demand peak searching approximation method, based on a relaxed version of the problem, leading to a linear-complexity algorithm that identifies the optimal peak demand.

Through simulations with real data from two rich and distinct datasets, we demonstrated the performance of the LSPS algorithm in achieving a small solution gap compared to baseline approaches, highlighting its effectiveness compared to relatively more computationally prohibitive methods such as model-predictive control and RL.

The work presented here has several limitations that open avenues for future research. Firstly, the RL algorithm used in our simulations is relatively simple. Designing a more sophisticated algorithm that reliably converges to the theoretical upper bound remains an important direction for future work. Second, the LSPS algorithm assumes full knowledge of the utility’s functional form $U(\cdot)$, which may be difficult to obtain in practice. This limitation can be addressed using machine learning or RL techniques that infer the utility function from observed behavior \cite{Ng&Russell:20ICML}. Lastly, it is worthwhile to investigate whether sufficient conditions can be established under which the LSPS algorithm remains optimal, even when projections onto the feasible solution set are involved.


\bibliographystyle{IEEEtranDOI}
\bibliography{ref}


\appendix
\subsection{Additional Theoretical Results}\label{app:theory}

\begin{theorem}\label{thm: myopic}
    The optimal policy of (\ref{eq: prob}) is not myopic when the constraint (\ref{DCevolution}) or (\ref{SoCevolution}) is relaxed.
\end{theorem}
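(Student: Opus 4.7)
The plan is to prove Theorem \ref{thm: myopic} by treating the two relaxations separately, and in each case exhibiting a short-horizon instance in which the Bellman optimum strictly differs from the one-step myopic maximizer. The common template is: identify the state variable that still carries temporal coupling under the relaxation, invoke (strict) monotonicity of $V_{t+1}^*$ in that state from \autoref{lem:monotonicity}, and observe that the resulting continuation-value gradient alters the first-order condition from its myopic counterpart; finally, pick parameters so the shift is realized in the interior of the feasible set $\mathcal{F}_t$ rather than being masked at a boundary.

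For the first case, where only (\ref{SoCevolution}) is relaxed, the peak-demand state $c_t$ still evolves via $c_{t+1}=\max\{z_t,c_t\}$, and the per-period payment $P_\tau$ in (\ref{eq:payment}) is strictly decreasing in $c_\tau$ on $\{z_\tau>c_\tau\}$. Combined with \autoref{lem:monotonicity}, this yields a strictly positive slope of $\mathbb{E}[V_{t+1}^*]$ in $c_{t+1}$ whenever there is positive probability of some later $z_\tau>c_t$. I would construct a two-period deterministic instance with a battery large enough that the SoC relaxation is slack, no renewables, $c_0=0$, and $p>0$, in which the myopic $z_0^{\mathrm{myo}}$ balances marginal utility against the full marginal cost $p_0^++p$, while the dynamic optimum $z_0^*$ is strictly larger because it correctly internalizes the savings $p\cdot\mathbf{1}[z_1>c_1]$ in the stage-$1$ demand charge that comes from raising $c_1$.

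For the second case, where only (\ref{DCevolution}) is relaxed so that $c_t\equiv c_0$, the SoC continues to couple stages via (\ref{SoCevolution}) together with (\ref{eq:StorageCapacity}). A classical storage-arbitrage instance suffices: taking a two-period setup with time-varying retail rate $p_0^+<p_1^+$, no stage-$0$ renewables, and stage-$1$ consumption whose marginal utility at zero exceeds $p_1^+$, \autoref{lem:monotonicity} yields strict positive slope of $V_1^*$ in $s_1$; the myopic action at $t=0$ (which treats charging as pure cost against the one-step reward) strictly differs from the dynamic optimum (which charges to exploit the spread before discharging at $t=1$). The main obstacle I anticipate is ensuring that the witnessed strict inequality $\pi_t^*\neq\pi_t^{\mathrm{myo}}$ is not masked by a corner of $\mathcal{F}_t$ or by degeneracies such as $p_t^+=p_t^-$; this will be handled by choosing the utility curvature, price gaps, and capacity parameters so that both the myopic and the dynamic optimizers fall in the interior of their feasible sets, making the gap between first-order conditions witnessable.
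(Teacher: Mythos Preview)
Your proposal is correct and follows essentially the same strategy as the paper: both handle the two relaxations separately by constructing short-horizon (two-period) instances in which the residual temporal coupling---through $c_t$ when (\ref{SoCevolution}) is relaxed, through $s_t$ when (\ref{DCevolution}) is relaxed---forces the stage-$0$ optimal action to depend on stage-$1$ data, contradicting myopicness. The paper's instances differ only in detail (it uses a renewable differential $g_0=g>g_1$ with $h'(g)=p^++p$ in the first case, and a sell-rate differential $p_0^-\lessgtr p_1^-$ with constant $U$, $\tau=\rho=1$, $\gamma=0$ in the second) and derives the contradiction by explicitly computing and comparing optima rather than by arguing via the continuation-value slope from \autoref{lem:monotonicity}, but the logical template is the same.
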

\begin{proof}
    When (\ref{SoCevolution}) is relaxed, suppose the optimal policy is myopic. Consider a special case where horizon $T=2$, initial SoC $s>2\underline{e}/\rho$, and $U_0=U_1=U$, $p^+_0=p_1^+=p^+$, $p^-_0=p_1^-=p^-$ are static. Then the optimization is
    \begin{subequations}
        \begin{align}
            \max_{v_0, v_1} \quad &h(v_0)+h(v_1) - p^+([v_0-g_0]^+ +[v_1-g_1]^+)\nonumber\\
            &+p^-(([v_0-g_0]^- +[v_1-g_1]^-)\nonumber\\
            &-p\max\{v_0-g_0, v_1-g_1, 0\}\\
            \text{s.t.}\quad & v_0, v_1\in[-\underline{e}, \bar{e}+\bar{d}]
        \end{align}
    \end{subequations}
   Let $g>0$ satisfies $h'(g)=p^++p$, then on case $g_0=g_1=g$, $v_0^*=v_1^*=g$. By the assumption, the optimal policy is myopic, so $v_0^*=g$ once $g_0=g$. However, when $g_0=g>g_1$, the optimal solution is $(g+g_0-g_1, g)$, leading to a contradiction. So the optimal policy cannot be myopic when only (\ref{SoCevolution}) is relaxed. 

    When (\ref{DCevolution}) is relaxed, suppose the optimal policy is myopic. Consider a special case where the horizon $T=2$, initial SoC $s=\underline{e}$, $U_0=U_1$ are constant functions, $\tau=\rho=1$, $\gamma=0$. Then the optimization is 
    \begin{subequations}
        \begin{align}
            \max_{e_0, e_1} \quad & - p_0^+[e_0-g_0]^+ + p_0^-[e_0-g_0]^-\nonumber\\
            &-p_1^+[v_1-g_1]^+ + p_1^-[v_1-g_1]^-\\
            \text{s.t.}\quad & s_1=\underline{e}+e_0\in[0, B],\\ &s_2=\underline{e}+e_0+e_1\in[0, B].
        \end{align}
    \end{subequations}
    When $g_0=g_1=0$, $p_0^+,p_1^+>p_0^->p_1^-$, the optimal solution is $(e_0, e_1)=(-\underline{e}, 0)$. By assumption, $e_0^*=-\underline{e}$ is independent of $p_1^-$ since the optimal policy is myopic. However, when $g_0=g_1=0$, $p_0^+,p_1^+>p_1^->p_0^-$, the optimal solution is $(e_0, e_1)=(0, -\underline{e})$, contradicting to $e_0^*=-\underline{e}$. So the optimal policy cannot be myopic when only (\ref{DCevolution}) is relaxed.
\end{proof}

\subsection{Additional Simulation Results}
\label{appendix:all_result}

Table \ref{table:result_all} records all the test cases performed comparing Backup mode, RATP (threshold), LSPS Algorithm, RL, and theoretical upper bound results. For each of the test scenarios, we varied one of the four parameters for both the HouseZero and BDGP buildings. For each case, the control strategy that performed the best (closest to optimal) is shown in bold. The total surplus (utility-electricity cost) for the five algorithms is listed. 

In general, LSPS achieved the lowest solution gap, being closest to optimal in 34 of 42 scenarios (81\%), while RL is close to optimal in the remaining 8 scenarios (19\%). For the smaller HouseZero building, LSPS outperforms RL in all cases. The numerical results highlight the inability of the backup and RATP to adapt to changing environmental settings. We observed significant performance degradation for cases with large battery capacity, high salvage value, and high demand charge. Meanwhile, LSPS and RL better adapt to the different scenarios while maintaining reasonable performance.

\begin{table*}[!htbp] 
\centering
\setlength{\tabcolsep}{6pt}
\caption{Surplus for different methods under all test scenarios with varying battery capacities, salvage values, sell prices, and peak prices.}
\label{table:result_all}
\begin{tabular}{cccccccccc}
\toprule
\midrule
\multicolumn{5}{c}{Scenario} & \multicolumn{4}{c}{Surplus for different methods} \\
\cmidrule(r){1-5}  \cmidrule(r){6-10}
Battery Capacity & Salvage Value & Sell Price & Peak Price & (Dis)Charge Limits & Backup & Threshold & LSPS & RL & Optimal \\ 
\midrule
\midrule
5kWh & \$ 0.09/kWh & \$ 0.06/kWh & \$ 10/kW & 1kW & 10.9 & 11.9 & \textbf{21.0} & 18.2 & 21.9\\
10kWh & \$ 0.09/kWh & \$ 0.06/kWh & \$ 10/kW & 1kW & 11.4 & 21.9 & \textbf{22.8} & 20.2 & 22.8\\
30kWh & \$ 0.09/kWh & \$ 0.06/kWh & \$ 10/kW & 1kW & 13.2 & 23.7 & \textbf{24.6} & 23.6 & 24.6\\
50kWh & \$ 0.09/kWh & \$ 0.06/kWh & \$ 10/kW & 1kW & 15.0 & 25.5 & \textbf{26.4} & 24.4 & 26.4\\
\midrule
350kWh & \$ 0.09/kWh & \$ 0.06/kWh & \$ 10/kW & 50kW & 419 & 429.7 & 755.2 & \textbf{760.2} & 1109.6 \\
550kWh & \$ 0.09/kWh & \$ 0.06/kWh & \$ 10/kW & 50kW & 437 & 447.7 & \textbf{906.8} & 785.7 & 1216.3 \\
750kWh & \$ 0.09/kWh & \$ 0.06/kWh & \$ 10/kW & 50kW & 455 & 465.7 & 973.9 & \textbf{1028.1} & 1312.5 \\
950kWh & \$ 0.09/kWh & \$ 0.06/kWh & \$ 10/kW & 50kW & 473 & 483.7 & \textbf{1080.1} & 1010.2 & 1392.2 \\
\midrule
\midrule
5kWh & \$ 0.03/kWh & \$ 0.06/kWh & \$ 10/kW & 1kW & 10.6 & 11.9 & \textbf{17.6} & 17.0 & 21.9 \\
5kWh & \$ 0.09/kWh & \$ 0.06/kWh & \$ 10/kW & 1kW & 10.9 & 11.9 & \textbf{21.0} & 18.2 & 21.9 \\
5kWh & \$ 0.17/kWh & \$ 0.06/kWh & \$ 10/kW & 1kW & 11.3 & 11.9 & \textbf{21.0} & 16.4 & 21.9 \\
5kWh & \$ 0.25/kWh & \$ 0.06/kWh & \$ 10/kW & 1kW & 11.7 & 11.9 & \textbf{21.1} & 16.6 & 21.9 \\
5kWh & \$ 0.5/kWh & \$ 0.06/kWh & \$ 10/kW & 1kW & 13.0 & 11.9 & \textbf{21.9} & 17.4 & 22.1 \\
5kWh & \$ 15/kWh & \$ 0.06/kWh & \$ 10/kW & 1kW & 85.5 & 13.4 & \textbf{89.6} & 86.4 & 93.3 \\
\midrule
350kWh & \$ 0.03/kWh & \$ 0.06/kWh & \$ 10/kW & 50kW & 398.0 & 429.7 & 755.2 & \textbf{781.1} & 1109.6 \\
350kWh & \$ 0.09/kWh & \$ 0.06/kWh & \$ 10/kW & 50kW & 419.0 & 429.7 & 755.2 & \textbf{760.2} & 1109.6 \\
350kWh & \$ 0.17/kWh & \$ 0.06/kWh & \$ 10/kW & 50kW & 447.0 & 429.7 & \textbf{873.9} & 658.6 & 1109.6 \\
350kWh & \$ 0.25/kWh & \$ 0.06/kWh & \$ 10/kW & 50kW & 475.0 & 429.7 & \textbf{899.3} & 780.0 & 1109.6 \\
350kWh & \$ 0.5/kWh & \$ 0.06/kWh & \$ 10/kW & 50kW & 562.5 & 429.7 & \textbf{1054.3} & 694.3 & 1109.6 \\
350kWh & \$ 15/kWh & \$ 0.06/kWh & \$ 10/kW & 50kW & 5637.5 & 429.7 & \textbf{5980.8} & 5734.2 & 6152.0 \\
\midrule
\midrule
5kWh & \$ 0.09/kWh & \$ 0/kWh & \$ 10/kW & 1kW & 7.4 & 8.7 & \textbf{17.9} & 12.9 & 18.8 \\
5kWh & \$ 0.09/kWh & \$ 0.03/kWh & \$ 10/kW & 1kW & 9.2 & 10.3 & \textbf{19.4} & 16.6 & 20.3 \\
5kWh & \$ 0.09/kWh & \$ 0.06/kWh & \$ 10/kW & 1kW & 10.9 & 11.9 & \textbf{21.0} & 18.2 & 21.9 \\
5kWh & \$ 0.09/kWh & \$ 0.09/kWh & \$ 10/kW & 1kW & 12.7 & 13.5 & \textbf{19.3} & 17.0 & 23.5 \\
5kWh & \$ 0.09/kWh & \$ 0.12/kWh & \$ 10/kW & 1kW & 14.4 & 15.1 & \textbf{21.1} & 20.4 & 25.1 \\
\midrule
350kWh & \$ 0.09/kWh & \$ 0/kWh & \$ 10/kW & 50kW & 419.0 & 429.7 & 755.2 & \textbf{826.9} & 1109.2 \\
350kWh & \$ 0.09/kWh & \$ 0.03/kWh & \$ 10/kW & 50kW & 419.0 & 429.7 & 755.2 & \textbf{781.2} & 1109.2 \\
350kWh & \$ 0.09/kWh & \$ 0.06/kWh & \$ 10/kW & 50kW & 419.0 & 429.7 & 755.2 & \textbf{760.2} & 1109.2 \\
350kWh & \$ 0.09/kWh & \$ 0.09/kWh & \$ 10/kW & 50kW & 419.0 & 429.7 & \textbf{755.2} & 690.5 & 1109.2 \\
350kWh & \$ 0.09/kWh & \$ 0.12/kWh & \$ 10/kW & 50kW & 419.0 & 429.7 & \textbf{755.2} & 670.4 & 1109.2 \\
\midrule
\midrule
5kWh & \$ 0.09/kWh & \$ 0.06/kWh & \$ 1/kW & 1kW & 20.8 & 21.3 & \textbf{21.6} & 21.0 & 22.1 \\
5kWh & \$ 0.09/kWh & \$ 0.06/kWh & \$ 2/kW & 1kW & 19.7 & 20.3 & \textbf{21.5} & 20.3 & 21.9 \\
5kWh & \$ 0.09/kWh & \$ 0.06/kWh & \$ 3/kW & 1kW & 18.6 & 19.2 & \textbf{21.3} & 20.1 & 21.9 \\
5kWh & \$ 0.09/kWh & \$ 0.06/kWh & \$ 4/kW & 1kW & 17.5 & 18.2 & \textbf{21.3} & 20.2 & 21.9 \\
5kWh & \$ 0.09/kWh & \$ 0.06/kWh & \$ 5/kW & 1kW & 17.1 & 16.4 & \textbf{21.2} & 20.1 & 21.9 \\
5kWh & \$ 0.09/kWh & \$ 0.06/kWh & \$ 10/kW & 1kW & 10.9 & 11.9 & \textbf{21.0} & 18.2 & 21.9 \\
\midrule
350kWh & \$ 0.09/kWh & \$ 0.06/kWh & \$ 1/kW & 50kW & 1868.5 & 1879.0 & \textbf{1874.2} & 1843.5 & 1919.0 \\
350kWh & \$ 0.09/kWh & \$ 0.06/kWh & \$ 2/kW & 50kW & 1707.4 & 1717.9 & \textbf{1745.2} & 1711.0 & 1806.9 \\
350kWh & \$ 0.09/kWh & \$ 0.06/kWh & \$ 3/kW & 50kW & 1546.4 & 1556.9 & \textbf{1610.5} & 1571.8 & 1701.2 \\
350kWh & \$ 0.09/kWh & \$ 0.06/kWh & \$ 4/kW & 50kW & 1385.3 & 1395.8 & \textbf{1474.1} & 1399.6 & 1602.1 \\
350kWh & \$ 0.09/kWh & \$ 0.06/kWh & \$ 5/kW & 50kW & 1224.3 & 1234.8 & \textbf{1342.3} & 1325.7 & 1509.0 \\
350kWh & \$ 0.09/kWh & \$ 0.06/kWh & \$ 10/kW & 50kW & 419.0 & 429.7 & 755.2 & \textbf{760.2} & 1109.2 \\
\midrule
\bottomrule
\end{tabular}
\end{table*}

\subsection{Computational Cost Comparison Between LSPS and Deterministic Optimization}\label{app:ComputationalCost}

LSPS optimizes one step while deterministic optimization's complexity depends on the frequency of action (decision period) and length of planning horizon. Fig. \ref{fig:comp_cost_compare} compares at different length of planning horizon, the computational time for both algorithms to generate one-step control actions and also to compute all actions through the planning horizon.

The main advantage of LSPS lies in the fact that it only needs to solve for optimal actions at each time step during control, whereas deterministic optimization must solve for the entire planning horizon. Thus, as the planning horizon increases, the computational time grows and becomes much less efficient than LSPS. While it is typically more efficient if all actions throughout the planning horizon are required to be solved, in practice and in real-world implementations, such a scenario is not likely needed.

\begin{figure*}[th]
    \centering
    \subfigure[Computational time for each step]{
    \includegraphics[width=0.44\linewidth]{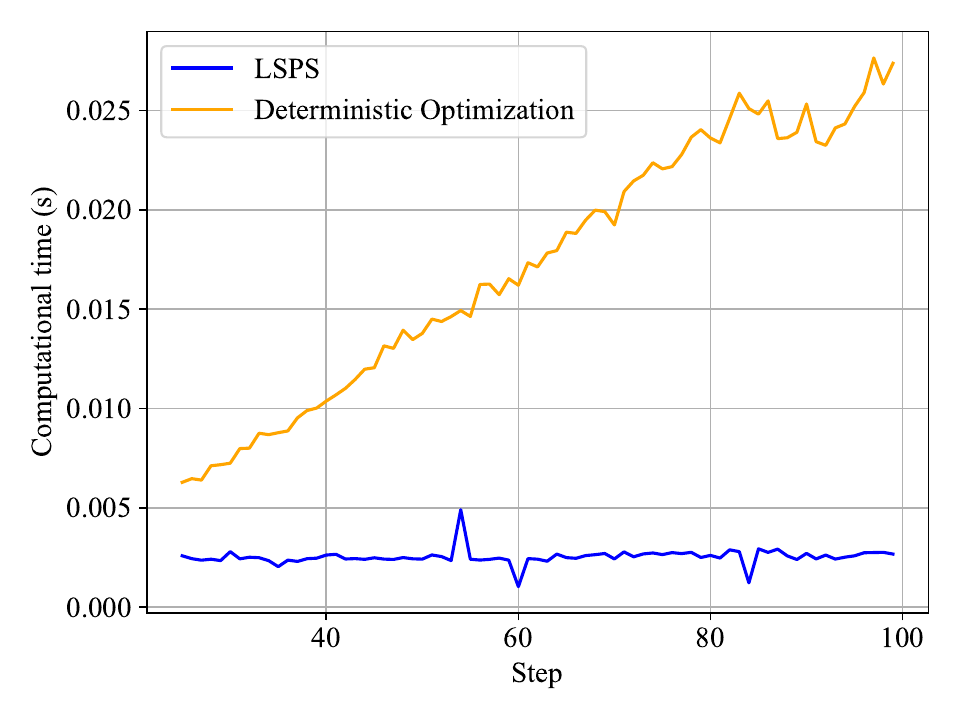}
    }
    \subfigure[Computational time for entire horizon]{
    \includegraphics[width=0.44\linewidth]{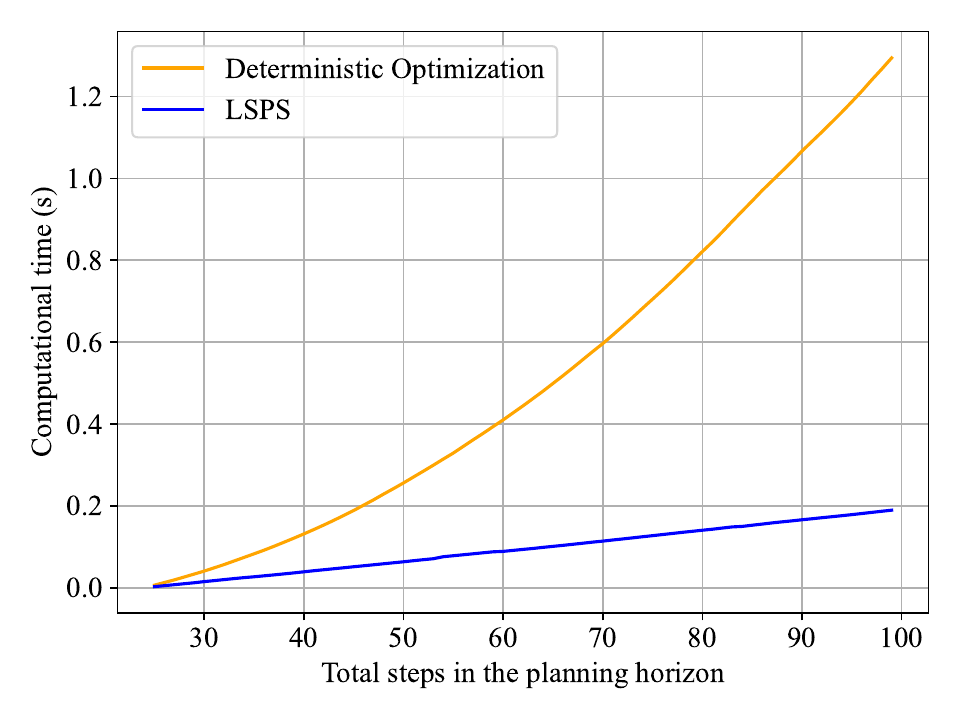}
    }
    \caption{Computational time comparison between LSPS and deterministic optimization.}
    \label{fig:comp_cost_compare}
\end{figure*}

\subsection{Comparison of Control Strategies}\label{app:AlgoComparison}

The advantages and weaknesses of the five different control strategies are summarized in Table \ref{tab:algo_ad_disad}. The \textit{optimal} control strategy achieves a theoretical upper bound that is obtained by solving a deterministic optimization problem. \textit{Backup} strategy is simple yet provides no additional operational savings, as it prioritizes storing power for contingencies. This control strategy is similar to Tesla's Powerwall backup operation mode. The renewable-adjusted threshold policy (RATP) algorithm is robust and safe with predefined rules that are highly suitable for NEM tariffs, yet it lacks adaptiveness to changes in parameters and environment, in addition to being agnostic to demand charges. RL and LSPS algorithms are more advanced and can adapt to different parameters and environments, which was reflected in their improved solution gap compared to the other algorithms. While RL requires training and LSPS requires renewable forecast, they both maintain a reasonable level of complexity during the prediction stage.

\begin{table*}[h]
    \centering
    \caption{Advantages and weaknesses of the considered control strategies}
    \label{tab:algo_ad_disad}
    \resizebox{\textwidth}{!}{
    \begin{tabular}{lll}
        \toprule
        \midrule
        Control strategies & Advantages & Weaknesses \\
        \cmidrule(r){1-1} \cmidrule(r){2-3}
        Optimal & Maximum surplus & \makecell[l]{Requires perfect knowledge of renewable generation \\ Exponential complexity with respect to the state of the dynamic program} \\
        \midrule
        Backup & \makecell[l]{No/low computation cost \\ Maximum storage during blackouts} & \makecell[l]{No cost savings and peak demand reduction \\ Energy loss over long-term storage}  \\
        \midrule
        RATP & \makecell[l]{Low computation cost \\ Robust and safe \\ Close to optimal for special cases} & \makecell[l]{Demand charge is not considered in the policy \\ Nonadaptive to changes in parameters \\ Negatively affect performance under special cases } \\
        \midrule
        RL & \makecell[l]{Close to optimal \\ Adaptive to different environment settings \\ Model-free, does not need prediction models} & \makecell[l]{Data intensive for training \\ High model training cost \\ Low interoperability} \\
        \midrule
        LSPS & \makecell[l]{Closest to optimal \\ Adaptive to different environment settings \\ Reasonable computational cost } & \makecell[l]{Requires renewable forecasting models \\ Impacted by the accuracy of renewable forecast} \\
        \midrule
        \bottomrule
    \end{tabular}
    }
\end{table*}

\subsection{RL Training Histories}
\label{appendix:train_history}

Fig. \ref{fig:train_hist} presents the training histories for two baseline cases for HouseZero and BDGP building. For both buildings, the rewards increased rapidly over the over 5000 to 7500 episodes with the BDGP taking a bit more time due the larger state and action spaces with higher net-demand and larger battery.

\begin{figure*}[th]
    \centering
    \subfigure[HouseZero]{
    \includegraphics[width=0.44\linewidth]{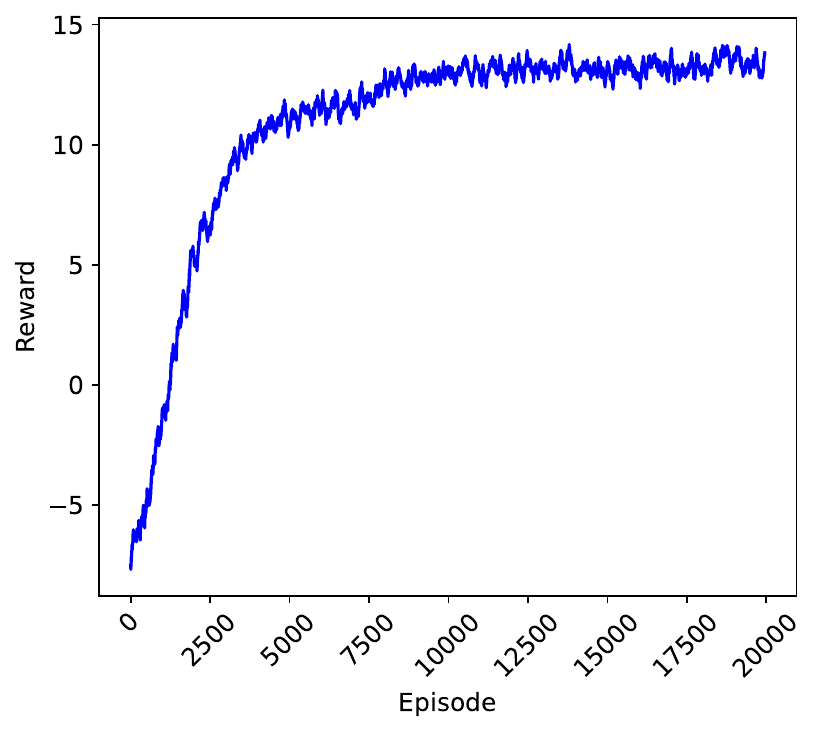}
    }
    \subfigure[BDGP building]{
    \includegraphics[width=0.45\linewidth]{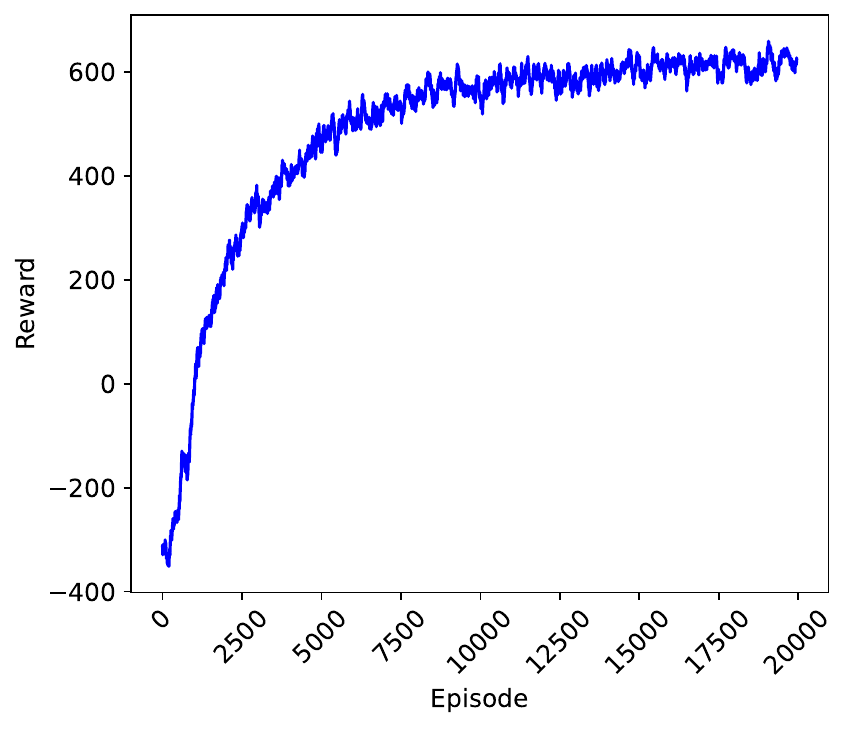}
    }
    \caption{Training histories for HouseZero and BDGP building baseline scenarios plotting with rolling average of 100 episodes.}
    \label{fig:train_hist}
\end{figure*}

\subsection{Optimality Gap Analysis}
\begin{figure*}
    \centering
    \includegraphics[width=0.8\linewidth]{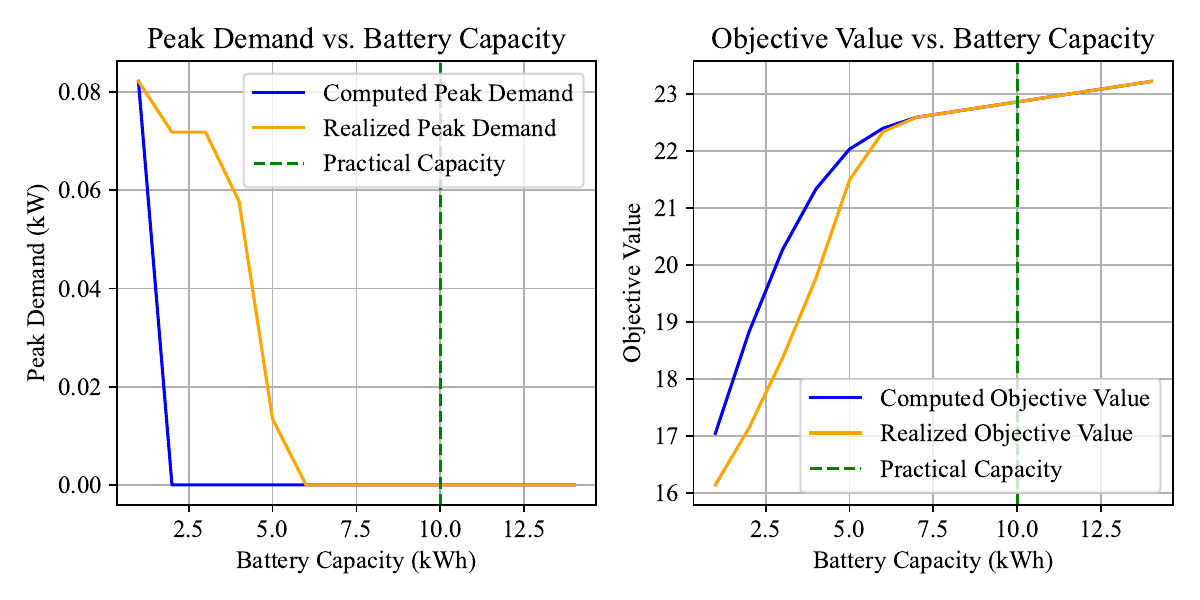}
    \caption{Projection error under different battery capacities}
    \label{fig:projection}
\end{figure*}

\begin{figure*}
    \centering
    \includegraphics[width=1\linewidth]{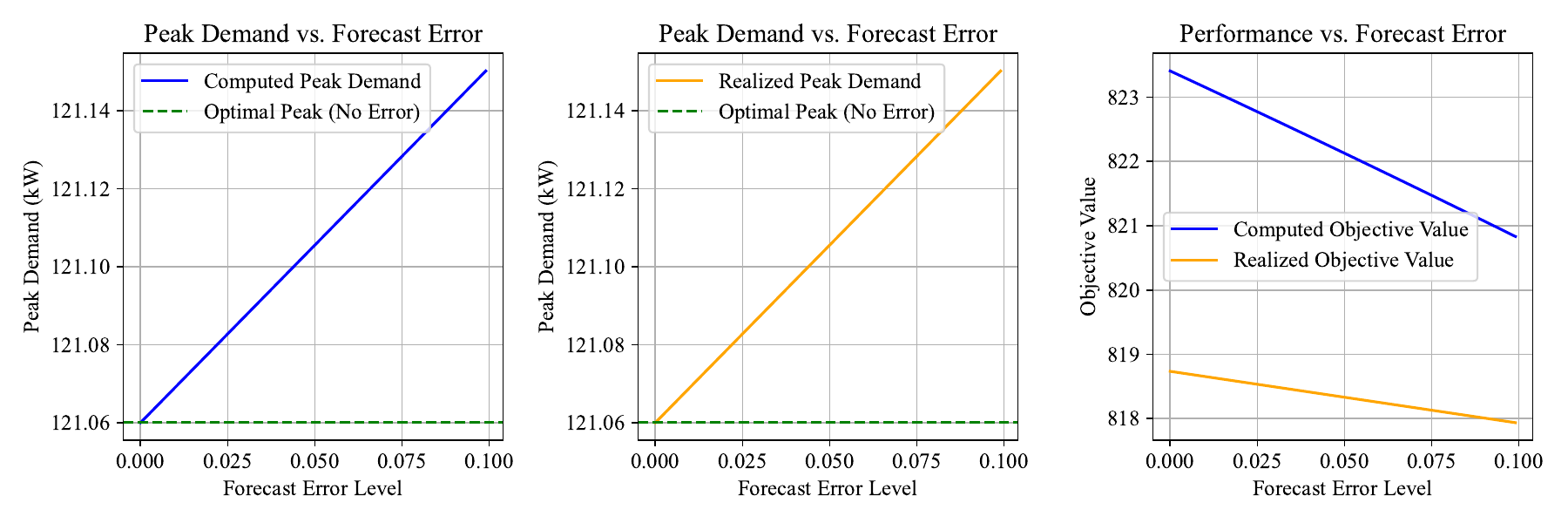}
    \caption{Forecast error's impact on computed peak demand, realized peak demand, and overall objectives for BDGP building.}
    \label{fig:error_analysis}
\end{figure*}
The optimality gap of LSPS arises from two sources: forecast errors in estimating peak demand and projections at the battery capacity limit. In this appendix, we present additional analyses for the two factors. 

Fig.~\ref{fig:projection} shows the effect of battery capacity on peak demand and objective value. The computed peak demand is the optimal peak demand LSPS targets, while the realized peak demand is the actual peak demand achieved after running LSPS over the test day. As shown in Fig.~\ref{fig:projection}, there is a gap when the capacity is small, causing many projections. When the capacity grows, the gap quickly closes, and LSPS becomes optimal at the practical capacity.

We tested the impacts of forecasting error by adding random noise to renewable generation. Using the BDGP building with the baseline settings described in Sec.~V.B of 350kWh battery capacity, charging/discharging limit of 50kW, and initial SoC with battery fully charged. As shown in Fig.~\ref{fig:error_analysis}, the computed peak demand deviates from the calculation with no error in the forecast. Given the computed peak demand, the LSPS algorithm can achieve it with minor errors. Lastly, the overall objective decreased with increasing error in the renewable forecast, as expected.

\end{document}